\documentclass{article}
\newcommand{\vect}[1]{| #1 \rangle}
\newcommand{\meas}[2]{| #1 \rangle \langle #2 | + | #2 \rangle \langle #1 |}
\newcommand{\vonneu}[1]{| #1 \rangle \langle #1 |}
\usepackage{physics} 
\usepackage{stmaryrd} 
\usepackage{tikz}              
\usetikzlibrary{positioning}   
\usepackage{forest} 
\usepackage{float}




    \usepackage[final]{neurips_2025}


\usepackage[utf8]{inputenc} 
\usepackage[T1]{fontenc}    
\usepackage{hyperref}       
\usepackage{url}            
\usepackage{booktabs}       
\usepackage{amsfonts}       
\usepackage{nicefrac}       
\usepackage{microtype}      
\usepackage{xcolor}         

\usepackage{amsmath}
\usepackage{amssymb}
\usepackage{mathtools}
\usepackage{amsthm}
\usepackage[capitalize,noabbrev]{cleveref}
\theoremstyle{plain}
\newtheorem{theorem}{Theorem}[section]

\newtheorem{lemma}[theorem]{Lemma}
\newtheorem{corollary}[theorem]{Corollary}
\theoremstyle{definition}
\newtheorem{definition}[theorem]{Definition}

\theoremstyle{remark}
\newtheorem{remark}[theorem]{Remark}

\title{Online Learning of Pure States is as Hard as Mixed States}

%

\author{%
  {\bf Maxime Meyer} \\
  Department of Mathematics \& IPAL, IRL2955\\
  National University of Singapore \\
  Singapore\\
  \texttt{maxime.meyer@u.nus.edu}
  \And
  {\bf Soumik Adhikary} \\
  Centre for Quantum Technologies \\
  National University of Singapore \\
  Singapore\\
  \texttt{soumik@nus.edu.sg}
  \And
  {\bf Naixu Guo} \\
  Centre for Quantum Technologies \\
  National University of Singapore \\
  Singapore\\
  \texttt{naixug@u.nus.edu}
  \And
  {\bf Patrick Rebentrost} \\
  Centre for Quantum Technologies \& School of Computing \\
  National University of Singapore \\
  Singapore\\
  \texttt{cqtfpr@nus.edu.sg}
}

\begin{document}

\maketitle

\begin{abstract}
  Quantum state tomography, the task of learning an unknown quantum state, is a fundamental problem in quantum information. In standard settings, the complexity of this problem depends significantly on the type of quantum state that one is trying to learn, with pure states being substantially easier to learn than general mixed states. A natural question is whether this separation holds for any quantum state learning setting. In this work, we consider the online learning framework and prove the surprising result that learning pure states in this setting is as hard as learning mixed states. More specifically, we show that both classes share almost the same sequential fat-shattering dimension, leading to identical regret scaling. We also generalize previous results on full quantum state tomography in the online setting to (i) the $\epsilon$-realizable setting and (ii) learning the density matrix only partially, using \textit{smoothed analysis}.
\end{abstract}

\section{Introduction}\label{sec:intro}

Learning information from an unknown quantum state is a fundamental task in quantum physics. An $N$-dimensional quantum state $\rho$ is represented as an $N \times N$ positive semi-definite Hermitian matrix with unit trace. Given several perfect copies of $\rho$,
full quantum state tomography seeks to reconstruct the complete matrix representation of $\rho$ via measurements. It has a wide range of practical applications, including tasks such as characterizing qubit states for superconducting circuits \citep{PhysRevLett.100.247001}, nitrogen-vacancy (NV) centers in diamond \citep{doi:10.1126/science.1189075}, and verifying successful quantum teleportation \citep{Bouwmeester_1997}. The ease of learning a quantum state is often characterized by the sample complexity, \textit{i.e.} the number of independent copies of the quantum state required for an accurate reconstruction. For a general state $\rho$, the sample complexity scales as $\tilde{\Theta}(N^3)$ for incoherent measurements. However, if the state is known to be pure, the sample complexity can be improved to $\tilde{\Theta}(N)$ \citep{kueng2014lowrankmatrixrecovery, harrow16sampletomography, chen2023doesadaptivityhelpquantum} . Nevertheless, for $n$-qubit states, where $N = 2^n$, this scaling implies an exponential dependence on the number of qubits.

However, for many practical problems, such as certification of a quantum device \citep{Gross_2010, Flammia_2011, eisert2020quantumcertification} and property estimation for quantum chemistry \citep{Huang_2021, Wu2023overlappedgrouping, guo2024estimatingpropertiesquantumstate, raza2024onlinelearningquantumprocesses, Miller_2024}, only partial information about the quantum state is needed. To address such cases, quantum PAC learning, also known as pretty good tomography, was introduced by \citet{aaronson2007learnability}. In this framework, a fixed distribution $\mathcal{D}$ governs the selection of two-outcome measurements $E$, which are represented by $N\times N$ Hermitian matrices with eigenvalues in $[0,1]$. 
The learner then tries to output a hypothesis state $\omega$ such that $\operatorname{Tr} (E \rho) \underset{E\sim\mathcal D}{\approx} \operatorname{Tr} (E \omega)$ with high probability. The number of measurements required to output this hypothesis $n$-qubit state was shown to scale only linearly with the number of qubits $n$, yielding an exponential improvement over full-state tomography. However, a key limitation of both these approaches is that they do not account for adversarial environments, where the set of realizable measurements may evolve over time.

This limitation can be circumvented by generalizing to the online learning setting \citep{aaronson2019online,Chen2024adaptive, gratteurs}, where learning a quantum state $\rho$ is posed as a $T$-round repeated two-player game. In each round $t \in [T]$, Nature--also called the adversary--chooses a measurement $E_t$ from the set of two-outcome measurements. The task of the learner is to predict the value $\mathrm{Tr}(E_t\rho)$ by selecting a hypothesis $\omega_t$ and computing $\mathrm{Tr}(E_t\omega_t)$ based on previous results. Thereafter, Nature returns the loss, a metric quantifying the difference between the prediction and the true value. The most adversarial scenario arises when the measurement at each round is chosen from the set of all two-outcome measurements without any constraints, \textit{i.e.}, it can be chosen adversarially and adaptively. It has been shown that in such cases, a learner can output a hypothesis state which incurs an additional $O(\sqrt{nT})$ loss compared to the best possible hypothesis state after $T$ rounds of the game \citep{aaronson2019online}. This measure of how much worse a learner performs compared to the best possible strategy in hindsight is called regret and serves as a fundamental measure of performance for any online learning problem.

Given the clear separation in sample complexity between pure and mixed state tomography \citep{kueng2014lowrankmatrixrecovery, harrow16sampletomography}, we ask the central question that this work aims to address:

\begin{center}
\textit{Is there any separation between online learning of pure and mixed quantum states?}
\end{center}

In this work we show that the answer is \textbf{No}. Our proof is based on the analysis of the sequential fat-shattering dimension of pure and mixed states. Informally, it can be seen as the minimum number of mistakes--defined as errors exceeding a threshold $\delta$--that a learner must make before successfully learning a quantum state $\rho$ against a perfect adversary. We have shown that both pure and mixed states share almost the same sequential fat-shattering dimension of order $\Theta(\frac{\delta^2}n)$, and hence the same regret. Indeed, it has been demonstrated that both upper and lower bounds on regret can be expressed in terms of this dimension \citep{rakhlin2015sequential}.

\noindent We believe that this result is surprising. Indeed, not only is there a significant difference in sample complexities between learning pure and mixed states in the standard tomographic settings, but this distinction also holds in specific online learning scenarios. For instance, under bandit feedback with adaptive measurements, \citet{Lumbreras_2022, lumbreras2024learningpurequantumstates} shows that the regret for mixed states grows exponentially faster than for pure states.

A prior work \citep{aaronson2019online} established tight bounds on the sequential fat-shattering dimension of general mixed states. In this work, we establish lower bounds on the sequential fat-shattering dimension -- and consequently on the regret -- of several subclasses of quantum states, the most important of which is the class of pure states. Our approach employs a distinct proof strategy, providing new insights and extending naturally to various subsettings of the online quantum state learning problem. As a result, we show that the regret for online learning of both pure and mixed quantum state scales as $\Theta(\sqrt{nT})$.

A key feature of the online learning setting considered here is the adversary's ability to select measurements in a completely unconstrained manner. As a result, this setting may effectively incur full state tomography, even in cases where only partial information about the state is needed.  We therefore introduce the concept of smoothness for online learning of quantum states. Smoothed analysis was first introduced in \citet{spielman2004} as a tool that allows for interpolation between the worst and the average case analysis. Later, \citet{Haghtalab2024} extended this concept to the online setting, where the degree of adversariality is quantified by a smoothness parameter $\sigma \in [0,1]$. The particular value $\sigma = 1$ corresponds to independent and identically distributed ($\mathrm{i.i.d.}$) inputs, while the limit $\sigma \rightarrow 0$ corresponds to fully adversarial inputs. In this work, we establish an upper bound on regret for smoothed online learning of quantum states, providing insights into the effect of adversariality on regret scaling.

\subsection{Main Contributions}\label{sec:contrib}

\textbf{Equivalence between pure and mixed state learning:} We show that pure and mixed states share almost the same sequential fat-shattering dimension, leading to identical regret scaling under the $L_1$ loss in \cref{sec:main}. We also prove a novel dependence of minimax regret with $L_2$ loss on the Rademacher complexity. This allows us to extend the tightness of regret to the $L_2$ loss setting for both pure and mixed states.

\textbf{Extensions of Online Learning of Quantum States to more realistic settings:} We obtain new regret bounds for two settings of interest, which capture more accurately the settings encountered by experimentalists, in \cref{sec:smooth}. The first is the $\epsilon$-realizable setting, where the learner is able to measure $\mathrm{Tr}(E_t\rho)$ with error $\epsilon$ at every round. The second is the smooth setting, where the learner is only interested about learning specific properties of the quantum state $\rho$.


\subsection{Related Works}\label{sec:related}

\textbf{Pure and mixed state tomography:}
In full state tomography of an $N$ dimensional quantum state $\rho$, the goal is to reconstruct its complete classical representation given several independent copies. The associated sample complexity refers to the number of copies required to obtain a classical description of $\rho$ up to an accuracy  $\epsilon$. It has been shown that the sample complexity up to trace distance $\epsilon$ is $\tilde{\Theta}(N r/\epsilon^2)$ for incoherent measurements \citep{harrow16sampletomography, kueng2014lowrankmatrixrecovery, chen2023doesadaptivityhelpquantum}, where $r$ is the rank of the density matrix $\rho$.
This result highlights a fundamental separation in the sample complexities between pure and mixed state tomography, as the rank of pure states is $r=1$.
Given the fundamental nature of this separation in quantum information science, one might expect it to persist in the online learning setting.
In fact, \citet{Lumbreras_2022} studies the online learning of properties of quantum states under bandit feedback with adaptive measurements, obtaining a regret scaling of $\Theta(\sqrt{T})$ for mixed states.
Subsequently, \citet{lumbreras2024learningpurequantumstates} improved this result to $\Theta(\mathrm{polylog}\ T)$ for pure states with rank-$1$ projective measurements, showing an exponential separation.
In contrast, our results demonstrate that, in the general online learning setting, the regret scaling for pure and mixed states is identical.

\textbf{Existing bounds:} \citet{aaronson2019online} showed that the sequential fat-shattering dimension of quantum states with parameter $\delta$ is tight, of order $\Theta(\frac{n}{\delta^2})$. They also use a result from \citet{arora2012multiplicative} to show that the regret is tight for the $L_1$ loss in the non-realizable case (that is when the data isn't assumed to come from an actual quantum state), being of order $\Theta(\sqrt{nT})$. In this paper, we generalize this result to several new settings of interest. We also introduce a new proof technique that allows us to provide lower bounds for sequential fat-shattering dimension for restricted settings, notably proving near-tightness for pure states in \cref{theo:final_pure}.

\section{Preliminaries}\label{sec:preliminaries}
\subsection{Quantum States and Measurements}

A general mixed $n$-qubit (quantum bit) quantum state $\rho$ corresponds to a Hermitian positive semi-definite matrix of size $2^n$ and of trace 1. It will be called a \textbf{pure} state if and only if it is of rank 1 (or equivalently if and only if $\operatorname{Tr}(\rho^2)=1$). We can thus identify any pure state $\rho$ to a vector in the $2^n$ dimensional Hilbert space $\mathbb{C}^{2^n}$, typically expressed in the Dirac notation as $\ket{\psi}$. Under this notation, any pure state vector can be expressed as $\ket{\psi} = \sum_{i = 1}^{2^n} c_i \ket{i}$ with $c_i \in \mathbb{C}$ and $\sum_{i = 1}^{2^n} \vert c_i \vert^2 = 1$. Here, $\{\ket{i}\}_{i=1}^{2^n-1}$ corresponds to the canonical basis of the vector space. For a pure state $\ket{\psi}$ the corresponding density matrix representation is given by $\rho=|\psi\rangle\langle\psi|$, where $\langle \psi|=\ket{\psi}^\dagger$ is the complex conjugate of $\ket{\psi}$.

Information from quantum states can be obtained via quantum measurements. In our work we will focus on two outcome measurements. They are represented by two-element positive operator-valued measure (POVM) $\{E, \mathbf{1} - E\}$, where $E\in\mathrm{Herm}_{\mathbb{C}}(2^n)$ and $\operatorname{Spec}(E) \subset[0, 1]$. Since the second element of the POVM is uniquely determined by the first, a two-outcome measurement can effectively be represented by a single operator $E$. A measurement $E$  is said to {\it accept} a quantum state $\rho$ with probability $\operatorname{Tr}(E\rho)$ and {\it reject} it with probability $1 - \operatorname{Tr}(E\rho)$. For a given quantum state $\rho$, predicting its acceptance probabilities for all measurements $E$ is tantamount to characterizing it completely. Hence learning a quantum state $\rho$ is equivalent to learning the function $\operatorname{Tr}_\rho$, defined as $\operatorname{Tr}_\rho (E) = \operatorname{Tr}(E \rho)$.

\subsection{Online Learning of Quantum States}

Online learning, or the sequential prediction model, is a $T$ round repeated two-player game \citep{online_book}. In each round $t \in [T]$ of the game, the learner is presented with an input from the sample space $x_t \in \mathcal{X}$. Without any loss of generality, we can assume that $x_t$ is sampled from a distribution $\mathcal{D}_t(\mathcal{X})$ where $\mathcal{D}_t$ may be chosen adversarially. The learner's goal is to learn an unknown function $h: \mathcal{X} \rightarrow \mathcal{Y}$ from the data they receive. Here, $\mathcal{Y}$ denotes the space of possible labels for each input $x_t$. The learning proceeds by designing an algorithm that outputs a sequence of functions $h_t:\mathcal{X} \rightarrow \mathcal{Y}$ chosen from a hypothesis class $\mathcal H$. After each round, the learner incurs a loss and aims to minimize the cumulative regret--which corresponds to the difference with the loss incurred by the best hypothesis in hindsight--at the end of all $T$ rounds of the game. The main figure of merit used in online learning, which represents the regret of the best strategy from the learner, when presented with the hardest possible inputs at every round is called the minimax regret \citep{rakhlin2015online}.



\begin{definition}[Minimax regret]
    Let $\mathcal{X}$ denote the sample space, $\mathcal{Y}$ its associated label space, and $\mathcal{H}$ the hypothesis class. Let $\mathcal{P}$ and $\Delta(\mathcal{H})$ be sets of probability measures defined on $\mathcal{X}$ and $\mathcal{H}$ respectively. In each round $t \in [T]$ of the online learning process, the learner incurs a loss $\ell_t (h_t(x_t), y_t)$, where $y_t \in \mathcal{Y}$ is the true label associated to $x_t$. The minimax regret is then defined as:
\begin{equation}
    \label{eq:minmax_reg}
    \begin{aligned}
     \mathcal{V}_T = \Big< \inf_{\mathcal{Q} \in \Delta(\mathcal{H})} \ \sup_{y_t} \  \sup_{\mathcal{D}_t \in \mathcal{P}} \  \mathop{\mathbb{E}}_{h_t \sim \mathcal{Q}} \  \mathop{\mathbb{E}}_{x_t \sim \mathcal{D}_t} \Big>_{t=1}^T \Big[ \sum_{t=1}^T \ell_t (h_t (x_t), y_t) - \inf_{h \in \mathcal{H}} \sum_{t=1}^T \ell_t(h(x_t), y_t) \Big],
     \end{aligned}
\end{equation}
    where $\big< \cdot \big>_{t=1}^T$ denotes iterated application of the enclosed operators. 
\end{definition}
The question of learnability of an online learning problem can then be reduced to the study of $\mathcal{V}_T$. Given a pair $(\mathcal{H}, \mathcal{X})$, a problem is said to be online learnable if and only if $\lim_{T \rightarrow \infty} \mathcal{V}_T/T = 0$.

In the context of quantum state learning, we define the sample space as $\mathcal{X} \subset \{E\in\mathrm{Herm}_{\mathbb{C}}(2^n), \operatorname{Spec}(E) \subset[0, 1]\}$. Denoting $\mathcal C_n$ as the set of all $n$-qubit quantum states, we  set the hypothesis class to be $\mathcal{H}_n=\{\operatorname{Tr}_\omega, \omega\in\mathcal C_n\}$, and $\mathcal Q$ can be seen as a distribution over $\mathcal C_n$. Here, the learner receives a sequence of measurements  $(E_t)_{t \in [T]}$, each drawn from a distribution $\mathcal{D}_t$ (chosen adversarially) one at a time. Upon receiving each measurement, the learner selects a hypothesis $\omega_t \in \mathcal{C}_n$ and thereby incurs a loss of $\ell_t (\operatorname{Tr}_{\omega_t} (E_t), y_t) = \ell_t (\operatorname{Tr} (E_t\omega_t ), y_t)$. In practice, for quantum tomography, $\ell_t$ is taken to be either the $L_1$ or the $L_2$ loss (that is $\ell_t(a,b)\in\{|a-b|,(a-b)^2\}$). Lastly, the label $y_t\in\mathcal Y = [0,1]$ is revealed to the learner. It may be an approximation of $\operatorname{Tr}(E_t\rho )$, but it is allowed to be arbitrary in general.

\subsection{Sequential fat-shattering dimension}

The main notion we will be studying in this paper is that of sequential fat-shattering dimension. 

\begin{definition}[Sequential fat-shattering dimension]

A $\mathcal{X}$-valued complete binary tree $\mathbf{x}$ of depth $T$ is deemed to be $\delta$-shattered by a hypothesis class $\mathcal{H}$ if there exists a real-valued complete binary tree $\mathbf{v}$ of same depth $T$ such that for all paths $\ \boldsymbol{\epsilon} \in \{\pm 1\}^{T-1}$,
\begin{equation*}
     \exists \ h \in \mathcal{H} \ : \ \forall t \in [T] \ \ \epsilon_t [h(\mathbf{x}_t(\boldsymbol{\epsilon})) - \mathbf{v}_t (\boldsymbol{\epsilon})] \geq \frac{\delta}{2}.
\end{equation*}
The sequential fat-shattering dimension at scale $\delta$, $\text{sfat}_\delta (\mathcal{H}, \mathcal{X})$, is defined to be the largest $T$ for which $\mathcal{H}$ $\delta$-shatters a $\mathcal{X}$-valued tree of depth $T$.
\end{definition}

Recall that a $\mathcal{X}$-valued complete binary tree of depth $T$, $\mathbf{x}$, is defined as a sequence of $T$ mappings $(\mathbf{x}_1, \mathbf{x}_2, \cdots, \mathbf{x}_T)$, where $\mathbf{x}_t: \{\pm 1\}^{t-1} \rightarrow \mathcal{X}$, with a constant function $\mathbf{x}_1 \in \mathcal{X}$ as the root. In simpler terms the tree can be seen as a collection of $T$ length paths $\boldsymbol{\epsilon} = (\epsilon_1, \epsilon_2, \cdots, \epsilon_{T-1}) \in \{\pm1\}^{T-1}$ ($+1$ indicating right and $-1$ indicating left from any given node) and $\mathbf{x}_t(\boldsymbol{\epsilon}) \equiv \mathbf{x}_t (\epsilon_1, \cdots, \epsilon_{t-1}) \in \mathcal{X}$ denoting the label of the $t$-th node on the corresponding path $\mathbf{\epsilon}$. 

This dimension is a fundamental property in online learning, as it both upper and lower bounds regret \citep{rakhlin2015online, rakhlin2015sequential}, as shown in \cref{eq:fin_class_bnd,eq:Regret_LB}.
\begin{equation}
\label{eq:fin_class_bnd}
    \mathcal{V}_T \leq \inf_{\alpha > 0} \Big\{ 4 \alpha T L - 12 L \sqrt{T} \int_{\alpha}^1 \sqrt{\text{sfat}_\delta (\mathcal{H}, \mathcal{X}) \log \left(\frac{2 e T}{\delta} \right)} d\delta \Big\}.
\end{equation}
Note that this upper bound also holds for $\Bar{\mathcal{V}}_T$. The bound in \cref{eq:fin_class_bnd} was used in \citet{aaronson2019online} to derive the regret upper bounds for online quantum state learning. Similarly, the minimax regret can also be lower bounded by the sequential fat-shattering dimension, provided that $\ell_t (h_t(x_t), y_t) = \vert h_t(x_t) - y_t \vert$ and that $\mathcal{P}$ is taken to be the whole set of all distributions on $\mathcal{X}$. 
\begin{equation}
\label{eq:Regret_LB}
    \mathcal{V}_T \geq \frac{1}{4 \sqrt{2}} \sup_{\delta > 0} \Big\{ \sqrt{\delta^2 T \min\{ \text{sfat}_\delta (\mathcal{H}, \mathcal{X}), T\}} \Big\}.
\end{equation}

\section{Lower Bounds for Online Learning of Quantum States}\label{sec:sfat_intro}

In this section, we employ a distinct proof strategy than \citet{aaronson2019online} to obtain lower bounds on sequential fat-shattering dimension--recall that they proved $\text{sfat}_\delta (\mathcal{H}_n, \mathcal{X})=\Theta(\frac{n}{\delta^2})$. This allows us to extend those bounds naturally to various subsettings of the online quantum state learning problem,  characterized by a restricted hypothesis class $\mathcal{H}\subset\mathcal H_n$ and a constrained sample space $\mathcal X$. Such settings frequently arise in practical applications, where the focus is on characterizing specific subsets of quantum states. Furthermore, experimental constraints often limit the implementable set of measurement operations. We derive bounds on $\text{sfat} (\mathcal{H}, \mathcal{X})$ for several such practically relevant subsettings, leading up to the most general formulation of the learning problem.

\textbf{Intuition of proofs:} Our proofs are based on the construction of $\mathcal X\times [0,1]$-valued trees, which can be looked upon as a means to extract information about a quantum state $\rho$ at every layer. After a certain number of layers, full information about the state is recovered. The number of layers achieved serves as a lower bound to the sequential fat-shattering dimension. Therefore, the goal is to construct the largest tree before full information is recovered.

In our construction, we define gaining information simply as approximating the coefficients of the density matrix $\rho$. At each layer, we approximate a different coefficient as seen in \cref{sec:vn}. This is how we construct the $\mathcal X$-valued part of the tree. We can then approximate each coefficient to an error $\epsilon$ using the Halving Tree defined in \cref{sec:bin} (\cref{sec:vn_bin} shows how to combine both trees). The intuition behind the choice of the coefficients we approximate comes from the study of chordal graphs in \cref{sec:tightness_gen}. An easier way to understand it is that we consider only the first row of the matrix, as it is of rank 1. Finally, we find the optimal error $\epsilon$ to obtain the lower bound.

To set the stage for the following sections, we first establish a few notations: since we will frequently consider pure states, the quantum state $\boldsymbol\omega (\boldsymbol{\epsilon})$ will be denoted by its associated vector $\ket{\psi (\boldsymbol{\epsilon})}$, where $\boldsymbol\omega (\boldsymbol{\epsilon})=\ket{\psi(\boldsymbol{\epsilon})} \bra{\psi(\boldsymbol{\epsilon})}$. Furthermore, we define $N=2^n$  to be the dimension of the Hilbert space under consideration. In addition, we will denote the pair of binary trees $(\mathbf{x}, \mathbf{v})$ by a single tree $\mathbf{T}$. We call $\mathbf{x}$ the $\mathcal{X}$-valued part of $\mathbf{T}$, and $\mathbf{v}$ the real-valued part of $\mathbf{T}$.

\textbf{We highlight} that while many of the theorems in this section can be recovered from existing results, our contribution lies in introducing a unified proof framework. This scheme not only underpins our main result in \cref{sec:main}, but also offers a versatile foundation that can be adapted to various sub-settings of interest.

\subsection{Learning with respect to a single measurement}
\label{sec:bin}

We start with the learning problem of estimating the expectation value of an unknown $n$-qubit quantum state with respect to a fixed measurement operator $E$. This learning problem is key to practical tasks such as quantum state discrimination and hypothesis testing \citep{barnett2008quantumstatediscrimination, Bae_2015}.
Formally, we take $\mathcal{X}=\{E\}$ to be the sample space and keep $\mathcal{H}_n$ as the hypothesis class. As mentioned previously, we are focused on providing a lower bound on $\text{sfat}_\delta(\mathcal{H}_n, \mathcal{X})$, which could then be used to lower bound $\mathcal{V}_T$. We achieve this by constructing what we call the Halving tree $\mathbf{T}_h$, which has a constant $\mathcal X$-valued part, and a real-valued part as shown in \cref{fig:halving_tree}. Every halving tree $\mathbf{T}_h [i, T]=(\mathbf x,\mathbf v)$ will thus be entirely determined by its depth $T$ and the constant measurement $\vonneu i$. The name follows from the distinctive structure exhibited by the real part of the tree $\mathbf{T}_h [i, T]$, as shown in \cref{fig:halving_tree}. This construction will prove useful as a crucial building block for establishing the regret bounds in more general settings (see \cref{theo:vnh,thm:final,theo:final_pure}).

\begin{figure}[ht]
    \centering
    \begin{minipage}[t]{0.45\textwidth}
        \centering
        
        \resizebox{\linewidth}{!}{%
        \begin{forest}
        for tree={
            font=\footnotesize,
            math content,
            align=center,
            s sep=0.4mm,
            l sep=0.6mm,
            inner sep=0.7mm,
            fit=band
        }
        [$\frac12$
            [$\frac14$
                [$\frac18$
                    [$\vdots$
                        [$\frac1{2^T}$]
                        [$\frac3{2^T}$]
                    ]
                    [$\vdots$
                        [$\cdots$, no edge]]
                ]
                [$\frac38$
                    [$\vdots$
                        [$\cdots$, no edge]]
                    [$\vdots$
                        [$\cdots$, no edge]]
                ]
            ]
            [$\frac34$
                [$\frac58$
                    [$\vdots$
                        [$\cdots$, no edge]]
                    [$\vdots$
                        [$\cdots$, no edge]]
                ]
                [$\frac78$
                    [$\vdots$
                        [$\cdots$, no edge]]
                    [$\vdots$
                        [$\frac{2^T-3}{2^T}$]
                        [$\frac{2^T-1}{2^T}$]
                    ]
                ]
            ]
        ]
        \end{forest}
        } 
        \caption{Real-valued part of the halving tree $\mathbf{T}_h$, up to $\frac{1}{N}$ factor.}        \label{fig:halving_tree}
    \end{minipage}\hfill
    \begin{minipage}[t]{0.45\textwidth}
        \centering
        
        \resizebox{\linewidth}{!}{%
        \begin{forest}
        for tree={
            font=\footnotesize,
            math content,
            align=center,
            s sep=0.4mm,
            l sep=6mm,
            inner sep=0.7mm,
            fit=band
        }
        [$\vonneu{0}$
            [$\vonneu{1}$
                [$\vdots$
                    [$\vonneu{T-1}$]
                    [$\cdots$]
                ]
                [$\vdots$
                    [$\cdots$, no edge]]
            ]
            [$\vonneu{1}$
                [$\vdots$
                    [$\cdots$, no edge]]
                [$\vdots$
                    [$\cdots$]
                    [$\vonneu{T-1}$]
                ]
            ]
        ]
        \end{forest}
        } 
        \caption{$\mathcal{X}$-valued part of the Von Neumann tree $\mathbf{T}_{vn}$.}        \label{fig:vn_tree}
    \end{minipage}
\end{figure}

\begin{theorem}\label{theo:single_meas}
   Let $E\in\mathrm{Herm}_{\mathbb{C}}(2^n), \operatorname{Spec}(E)\subset[0,1]$ be a fixed measurement, and $\mathcal{X}=\{E\}$ be the sample space.
    Let $\mathcal{H}_n=\{\operatorname{Tr}_\omega, \omega\in\mathcal C_n\}$ be the hypothesis class, where $\mathcal{C}_n$ is the set of all $n$-qubit quantum states. Then we have $\text{sfat}_\delta(\mathcal H_n, \mathcal{X})=\Omega(\log_2(\frac1\delta))$.
\end{theorem}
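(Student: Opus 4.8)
\medskip
\noindent\emph{Proof proposal.} The plan is to build the \emph{halving tree} $\mathbf{T}_h$ promised above and show it is $\delta$-shattered by $\mathcal{H}_n$ whenever its depth is of order $\log_2(1/\delta)$, which yields the stated $\Omega$ bound directly. The starting observation is that, since $\mathcal{X}=\{E\}$ is a single point, every node of the $\mathcal{X}$-valued part of the tree must carry the label $E$, so a hypothesis $\operatorname{Tr}_\omega$ is seen by the tree only through the one number $p_\omega:=\operatorname{Tr}(E\omega)$. I would then record the range of this number. Writing $a:=\lambda_{\min}(E)$ and $b:=\lambda_{\max}(E)$, and assuming $a<b$ (a degenerate $E$ carries no information, making the statement vacuous), we have $\{p_\omega:\omega\in\mathcal{C}_n\}=[a,b]$, and moreover every $p\in[a,b]$ is attained by the pure state $\ket{\psi_p}=\sqrt{(b-p)/(b-a)}\,\ket{v_{\min}}+\sqrt{(p-a)/(b-a)}\,\ket{v_{\max}}$, where $\ket{v_{\min}},\ket{v_{\max}}$ are orthonormal eigenvectors of $E$ for $a$ and $b$. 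Hence $\delta$-shattering a depth-$T$ tree reduces to the following: for each of the root-to-leaf sign patterns, produce one value $p\in[a,b]$ consistent, with slack $\delta/2$, with the sequence of ``above this threshold''/``below this threshold'' constraints the pattern prescribes --- i.e.\ exactly a binary search with tolerance.

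\medskip
\noindent Concretely, I would take the real-valued part $\mathbf{v}$ to be the binary-search tree on $[a,b]$: attach to the node reached by a prefix $\boldsymbol\sigma$ the interval $I_{\boldsymbol\sigma}$ with $I_\varnothing=[a,b]$ and, for $I_{\boldsymbol\sigma}=[\alpha,\beta]$ with midpoint $m$, $I_{\boldsymbol\sigma,+1}=[m,\beta]$ and $I_{\boldsymbol\sigma,-1}=[\alpha,m]$; the value stored at that node is $m$. Up to the affine identification $[0,1]\cong[a,b]$ this is precisely the tree drawn in \cref{fig:halving_tree}, and $|I_{\boldsymbol\sigma}|=(b-a)2^{-|\boldsymbol\sigma|}$.

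\medskip
\noindent The verification step is then: fix a root-to-leaf path $\boldsymbol\epsilon$ and let $[\alpha^\star,\beta^\star]:=I_{\boldsymbol\epsilon}$ be its leaf interval, of length at least $(b-a)2^{-T}$. A short induction along the path shows that the intervals $I_{\boldsymbol\epsilon_{\le t}}$ are nested down to $[\alpha^\star,\beta^\star]$, and that at a level $t$ with $\epsilon_t=+1$ the stored threshold $\mathbf{v}_t(\boldsymbol\epsilon)$ equals the \emph{left} endpoint of $I_{\boldsymbol\epsilon_{\le t}}$, hence is $\le\alpha^\star$, while at a level $t$ with $\epsilon_t=-1$ it equals the \emph{right} endpoint, hence is $\ge\beta^\star$. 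Therefore any $p\in[\alpha^\star+\delta/2,\ \beta^\star-\delta/2]$ satisfies $\epsilon_t[p-\mathbf{v}_t(\boldsymbol\epsilon)]\ge\delta/2$ at every level, and this interval is nonempty as soon as $(b-a)2^{-T}\ge\delta$. Choosing $T=\lfloor\log_2((b-a)/\delta)\rfloor$ and, for each leaf, such a $p$ together with the associated pure state $\ket{\psi_p}$, produces a witness for $\delta$-shattering; hence $\text{sfat}_\delta(\mathcal{H}_n,\mathcal{X})\ge T=\Omega(\log_2(1/\delta))$, the additive constant $\log_2(b-a)$ (which vanishes when $E$ is a nontrivial projector) being absorbed into the $\Omega$. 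Since every witnessing state is pure, the same bound holds verbatim for the pure-state hypothesis class, foreshadowing the main result.

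\medskip
\noindent I do not anticipate a genuine obstacle here --- this is a warm-up whose real payoff is the reusable halving-tree gadget --- so the only place to be careful is the verification step: aligning the ``up/down'' convention of the binary search with the sign $\epsilon_t$ in the shattering inequality, and keeping the $\delta/2$ slack on the correct (freshly cut) endpoint at each level so that the \emph{leaf} interval, rather than some intermediate one, governs feasibility. The spectral facts about $E$ and the counting of sign patterns are routine.
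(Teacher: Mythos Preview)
Your proof is correct and uses essentially the same halving-tree construction and pure-state witnesses as the paper. The only differences are presentational: you treat a general $E$ via its extremal eigenvalues $a,b$ and verify shattering through a clean nested-interval argument, whereas the paper reduces (with an unjustified ``without loss of generality'') to a rank-one projector and computes the margin by explicit summation of the dyadic series; your explicit handling of the degenerate case $a=b$ is a small improvement in rigor.
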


We prove this result in \cref{pf:single_meas}.

\begin{remark}
    Note that the lower bound on the fat-shattering dimension obtained above is independent of $n$, and therefore still holds if the hypothesis class is induced by 1-qubit pure states.
\end{remark}

\subsection{Learning uniform superposition states}
\label{sec:vn}

We now shift our attention to a harder setting. Consider the sample space $\mathcal{X}$ consisting of the $N$ measurements corresponding to an orthogonal basis of $\mathcal C_n$. The hypothesis class will be induced by the uniform superpositions of basis states.
Such states play an important role in fundamental quantum algorithms \citep{365701, Shor_1997, Grover_1997}, and for quantum random number generators \citep{Mannalatha_2023}.
We now provide a lower bound to the sequential fat-shattering dimension for this specific setting. We accomplish this by constructing what we call the Von Neumann tree $\mathbf{T}_{vn}$.
The name follows from the fact that, while its real-valued part is constant, all nodes in the $\mathcal{X}$-valued part of $\mathbf{T}_{vn}$ are labelled by Von Neumann measurements as shown in \cref{fig:vn_tree}.

\begin{theorem}\label{theo:uniform}
    Let $(\vect {0},...,\vect{N-1})$ be an orthogonal basis of $\mathcal C_n$, where $\mathcal{C}_n$ is the set of all $n$-qubit quantum states. Denote the sample space as $\mathcal{X}=\{\vonneu{i},i\in\llbracket 0, N-1 \rrbracket\}$. Let $\mathcal{H}=\{\operatorname{Tr}_\omega,\omega=\frac1{\sqrt{\vert I\vert}}\sum_{i\in I}\vect i, I\subset\llbracket0,N-1\rrbracket\}$ be the hypothesis class. Then we have $\text{sfat}_\delta(\mathcal H, \mathcal{X})=\Omega(\min(\frac1\delta,2^n))$.

\end{theorem}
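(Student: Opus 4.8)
The plan is to construct the Von Neumann tree $\mathbf{T}_{vn}$ explicitly and show it is $\delta$-shattered by the class of uniform superposition states, for a tree of depth $T = \Omega(\min(\tfrac1\delta, 2^n))$. The key insight is that the $\mathcal{X}$-valued part should be the fixed ``path-independent'' tree where the node at depth $t$ (on \emph{every} path) is the Von Neumann measurement $E_t = \vonneu{t-1}$ associated to the $t$-th basis vector; this is what \cref{fig:vn_tree} depicts. The real-valued part $\mathbf{v}$ will be constant, equal to some fixed value $c$ at every node (a natural guess is $c = \tfrac{1}{|I|}$ for the relevant index set size, or more simply a value like $\tfrac{1}{T}$ after we fix the depth). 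For a given path $\boldsymbol\epsilon \in \{\pm1\}^{T-1}$, we must exhibit a uniform superposition state $\ket{\psi(\boldsymbol\epsilon)} = \tfrac{1}{\sqrt{|I(\boldsymbol\epsilon)|}}\sum_{i \in I(\boldsymbol\epsilon)} \vect i$ whose overlap $|\langle i | \psi(\boldsymbol\epsilon)\rangle|^2$ with basis vector $\vect i$ is at least $c + \tfrac\delta2$ when $\epsilon_i = +1$ and at most $c - \tfrac\delta2$ when $\epsilon_i = -1$. The natural choice is $I(\boldsymbol\epsilon) = \{\, t \in [T] : \epsilon_t = +1 \,\}$ (reading the path coordinates as a subset of indices), so that $\operatorname{Tr}(E_t \vonneu{\psi(\boldsymbol\epsilon)}) = \tfrac{1}{|I(\boldsymbol\epsilon)|}$ if $t \in I(\boldsymbol\epsilon)$ and $0$ otherwise.

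The main calculation is then to verify the shattering inequality. With $m := |I(\boldsymbol\epsilon)|$, on a ``$+1$'' coordinate the margin is $\tfrac1m - c$ and on a ``$-1$'' coordinate it is $c - 0 = c$; we need both to be $\geq \tfrac\delta2$ simultaneously, for \emph{all} paths, i.e.\ for all $m \in \{0,1,\dots,T-1\}$ (the number of $+1$'s along a path can be anything in that range). The constraint $c \geq \tfrac\delta2$ is easy; the binding constraint is $\tfrac1m - c \geq \tfrac\delta2$ in the worst case $m = T-1$, giving $\tfrac{1}{T-1} \geq c + \tfrac\delta2 \geq \delta$. This forces $T - 1 \leq \tfrac1\delta$, which is exactly the $\Omega(\tfrac1\delta)$ regime; choosing for instance $T = \lfloor \tfrac{1}{2\delta}\rfloor$ or similar and $c = \tfrac\delta2$ (or balancing the two sides) makes everything work. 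The dimensional cap $2^n$ enters because the basis $(\vect 0,\dots,\vect{N-1})$ has only $N = 2^n$ elements, so the $\mathcal{X}$-valued tree can have depth at most $N$ (we run out of distinct basis measurements); hence $T = \Omega(\min(\tfrac1\delta, 2^n))$.

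I would also need to handle the degenerate endpoints carefully: the all-$(-1)$ path gives $m = 0$ and the empty superposition, so I should either require at least one $+1$ coordinate or, more cleanly, note that the state can be any uniform superposition over a set containing the $+1$ indices — in fact the cleanest fix is to let $I(\boldsymbol\epsilon)$ be the $+1$ indices \emph{together with} enough ``padding'' basis indices beyond index $T$ so that $|I(\boldsymbol\epsilon)|$ is always exactly some fixed $m_0$; but since $\mathcal{X}$ only contains the first $N$ Von Neumann measurements and padding requires spare dimensions, this is where the $\min$ with $2^n$ becomes slightly delicate. I expect the main obstacle to be precisely this bookkeeping: making the overlap \emph{exactly constant} across all nodes so that a single constant real-valued tree works, while respecting that along different paths the superpositions have different support sizes, and simultaneously not exceeding the available dimension $N$. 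One resolution is to give up on a constant $\mathbf{v}$ and instead set $\mathbf{v}_t(\boldsymbol\epsilon)$ to depend on the number $k$ of $+1$'s among $\epsilon_1,\dots,\epsilon_{t-1}$ seen so far — but then one must check the shattering condition still holds for the two children, which should go through since the margin requirement only couples a node's value to the overlaps achievable by the two sub-families of states. I would present the constant-$\mathbf{v}$ version if the padding works out within $n$ qubits, and otherwise the path-dependent version; either way the quantitative bottleneck is the $\tfrac1m$ scaling that yields $T = \Theta(\tfrac1\delta)$ up to the $2^n$ ceiling.
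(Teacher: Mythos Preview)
Your approach is essentially the paper's: the same path-independent $\mathcal X$-valued tree $\mathbf x_t=\vonneu{t-1}$, a constant real-valued tree, and uniform superpositions indexed by the $+1$ coordinates of the path, with the same margin arithmetic $\tfrac1m - c$ versus $c$. The only place you hesitate is the degenerate $m=0$ path and the bookkeeping around it; the paper resolves this more simply than either of your two proposed fixes: always adjoin a single fixed padding vector $\vect{N-1}$ to $I(\boldsymbol\epsilon)$ (so $|I|=K+1\in[1,T]$) and take $c=\tfrac1{2T}$, which makes both margins equal to at least $\tfrac1{2T}$ and requires only $T<N$ rather than $2T\lesssim N$.
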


We prove this result in \cref{pf:uniform}.

\subsection{Learning general states using Von Neumann measurements}
\label{sec:vn_bin}

Building on the setting established in the previous section, we consider the sample space $\mathcal{X}$ consisting of the $N$ measurements corresponding to an orthogonal basis of $\mathcal C_n$. The hypothesis class in the present setting is however induced by the set of all pure quantum states. The corresponding learning problem involves a learner estimating the expectation values of an unknown $n$-qubit quantum state with respect to $N$ measurement operators, where the hypothesis is chosen from the set of all $n$-qubit pure quantum states.
Related problems have been considered, for example, in \citet{zhao2023provablelearningquantumstates}.

We now provide a lower bound to the sequential fat-shattering dimension for this specific setting. We accomplish this by constructing what we call the Von Neumann Halving tree $\mathbf{T}_{vnh}$, which is constructed by combining $\mathbf{T}_h$ and $\mathbf{T}_{vn}$ as shown in \cref{fig:vn_halving_tree}. Our construction illustrates the utility of $\mathbf{T}_h$, demonstrating its effectiveness as a tool to multiply existing lower bounds by a factor of $n$.


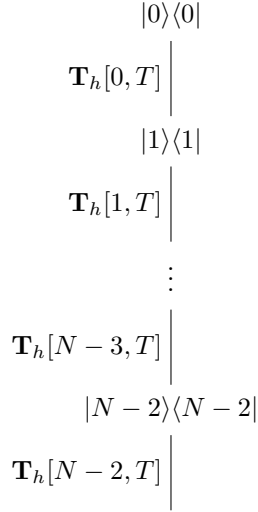
\begin{figure}[ht]
    \centering
    
    \begin{forest}
    for tree={
        math content,
        align=center,
        grow'=0,            
        parent anchor=east, 
        child anchor=west,  
        l sep=20mm,         
        s sep=5mm           
    }
    [$\vonneu{0}$
        [$\vonneu{1}$, edge label={node[midway,above] {$\mathbf{T}_h [0, T]$}}
            [$\cdots$, edge label={node[midway,above] {$\mathbf{T}_h [1, T]$}}
                [$\vonneu{N-2}$, edge label={node[midway,above] {$\mathbf{T}_h [N-3, T]$}}
                    [\null, edge label={node[midway,above] {$\mathbf{T}_h [N-2, T]$}}]
                ]
            ]
        ]
    ]
    \end{forest}
    \caption{Von Neumann Halving Tree (Horizontal)}    \label{fig:vn_halving_tree}
\end{figure}

\begin{theorem}
\label{theo:vnh}
Let $(\vect {0},...,\vect{N-1})$ be an orthogonal basis of $\mathcal C_n$ and $\mathcal{X}=\{\vonneu{i},i\in\llbracket 0, N-1 \rrbracket\}$ be the sample space. Let $\mathcal{H}=\{\operatorname{Tr}_\omega, \omega \in \mathcal{C}_n, \operatorname{Tr}(\omega^2) = 1\}$ be the hypothesis class. Here $\mathcal{C}_n$ is the set of all $n$-qubit quantum states. Then,  for $\delta=2^{-\frac n\eta}$, we have $\text{sfat}_\delta(\mathcal H_n, \mathcal{X})=\Omega(\frac n{\delta^\eta}) \ \forall\eta<1$.

\end{theorem}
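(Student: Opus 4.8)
The plan is to instantiate the explicit-construction strategy of \cref{sec:sfat_intro}: exhibit a single $\mathcal X$-valued complete binary tree that is $\delta$-shattered by the pure-state class $\mathcal H=\{\operatorname{Tr}_\omega:\operatorname{Tr}(\omega^2)=1\}$, which suffices since $\mathcal H\subseteq\mathcal H_n$. That tree is the Von Neumann Halving tree $\mathbf T_{vnh}$ of \cref{fig:vn_halving_tree}, obtained by concatenating the halving trees $\mathbf T_h[0,T],\mathbf T_h[1,T],\dots,\mathbf T_h[N-2,T]$ into $N-1$ successive \emph{blocks}: the root of $\mathbf T_{vnh}$ is the root of $\mathbf T_h[0,T]$, and every leaf of block $i$ is identified with the root of a fresh copy of $\mathbf T_h[i+1,T]$. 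Each $\mathbf T_h[i,T]$ is a depth-$T$ complete binary tree whose $\mathcal X$-valued part is the constant measurement $\vonneu i$ and whose real-valued part is the dyadic halving tree of \cref{fig:halving_tree} rescaled by $\tfrac1N$, so $\mathbf T_{vnh}$ is a complete binary tree of depth $(N-1)T$ over $\mathcal X=\{\vonneu i\}_i$. The decisive structural fact is that a root-to-leaf path $\boldsymbol\epsilon$ of $\mathbf T_{vnh}$ splits into $N-1$ sub-paths $\boldsymbol\epsilon^{(0)},\dots,\boldsymbol\epsilon^{(N-2)}$, one per block, and — writing the hypothesis as a pure state $\ket{\psi(\boldsymbol\epsilon)}=\sum_i\sqrt{p_i}\,\vect i$, so that $\operatorname{Tr}\!\big(\vonneu i\,\ket{\psi(\boldsymbol\epsilon)}\!\bra{\psi(\boldsymbol\epsilon)}\big)=p_i$ — sub-path $\boldsymbol\epsilon^{(i)}$ constrains only the single coordinate $p_i$.

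Given a path, I would read off $\ket{\psi(\boldsymbol\epsilon)}$ as follows. Running the halving tree of block $i$ from the interval $[0,\tfrac1N]$ and branching along $\boldsymbol\epsilon^{(i)}$ produces a nested chain of dyadic subintervals ending in $[a_i,b_i]$ with $b_i-a_i=\Theta(2^{-T}/N)$; moreover, by the midpoint structure of \cref{fig:halving_tree}, every node value $w$ visited on the sub-path satisfies $w\le a_i$ if its branch is $+1$ (each $+1$ resets the left endpoint to the current midpoint) and $w\ge b_i$ if its branch is $-1$. Set $p_i:=\tfrac12(a_i+b_i)$ for $0\le i\le N-2$ and $p_{N-1}:=1-\sum_{i=0}^{N-2}p_i$. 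Since each $p_i\le b_i\le\tfrac1N$, we get $\sum_{i\le N-2}p_i\le\tfrac{N-1}{N}$, hence $p_{N-1}\ge\tfrac1N>0$, so $(p_i)_i$ is a legitimate probability vector and $\ket{\psi(\boldsymbol\epsilon)}\in\mathcal C_n$ is a genuine pure state.

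Then I would verify $\delta$-shattering block by block. For any node of block $i$ with value $w$ and branch $\epsilon$, the endpoint bounds above give $p_i-w\ge\tfrac12(b_i-a_i)$ when $\epsilon=+1$ and $w-p_i\ge\tfrac12(b_i-a_i)$ when $\epsilon=-1$; hence $\epsilon\big[\operatorname{Tr}(\vonneu i\,\ket{\psi(\boldsymbol\epsilon)}\!\bra{\psi(\boldsymbol\epsilon)})-w\big]\ge\tfrac12(b_i-a_i)=\Theta(2^{-T}/N)$, and this is $\ge\tfrac\delta2$ provided $T\le\log_2\tfrac1\delta-n+O(1)$. Choosing $T=\big\lfloor\log_2\tfrac1\delta-n\big\rfloor$ (positive once $\delta<2^{-n}$) makes $\mathbf T_{vnh}$ $\delta$-shattered, so $\text{sfat}_\delta(\mathcal H_n,\mathcal X)\ge(N-1)T=\Omega\!\big(N(\log_2\tfrac1\delta-n)\big)$. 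With $\delta=2^{-n/\eta}$ one has $\log_2\tfrac1\delta-n=n(\tfrac1\eta-1)$ and $\delta^{-\eta}=2^n=N$, so for every fixed $\eta<1$ this reads $\text{sfat}_\delta(\mathcal H_n,\mathcal X)=\Omega\!\big(\tfrac{1-\eta}{\eta}\,nN\big)=\Omega\!\big(n/\delta^\eta\big)$, which is the claim.

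The hard part is not any single calculation but the tension the construction must resolve: feasibility of the per-block binary search down to depth $T$ forces the terminal interval length $\Theta(2^{-T}/N)$ to stay above $\delta$, costing an additive $n$ in the usable depth; yet the global normalization $\sum_i p_i=1$ over all $N-1$ blocks forces each $p_i$ to be capped at $\tfrac1N$. The single rescaling by $\tfrac1N$ in \cref{fig:halving_tree} is exactly what meets both demands at once, and — this is the conceptual punchline — the $n$ lost in per-block depth is negligible precisely because $\delta$ is taken exponentially small in $n$ ($\log_2\tfrac1\delta=n/\eta\gg n$), which is why the theorem is phrased at the scale $\delta=2^{-n/\eta}$. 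The remaining work is bookkeeping: propagating the nested intervals and the accumulated $\tfrac\delta2$ margins along each sub-path, and checking that the block concatenation is genuinely a complete binary tree so that the sfat lower bound applies verbatim.
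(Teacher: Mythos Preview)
Your proposal is correct and takes essentially the same approach as the paper: the same Von Neumann Halving tree $\mathbf T_{vnh}$, the same pure witnesses $\ket{\psi(\boldsymbol\epsilon)}=\sum_i\sqrt{p_i}\,\vect i$ with $p_i$ the midpoint of block~$i$'s terminal dyadic interval (this coincides exactly with the paper's $a_i=\tfrac1{2N}(1+\sum_{k=1}^T\epsilon_{k+Ti}2^{-k})$), and the same margin bound $\ge 2^{-(T+1)}/N$ yielding depth $(N-1)T$ with $T\approx\log_2(1/\delta)-n$. The only differences are cosmetic --- you phrase the halving step via nested intervals where the paper writes out the dyadic sums --- plus a small wording slip in the concatenation (the leaves of block~$i$ should \emph{parent} fresh copies of $\mathbf T_h[i{+}1,T]$, not be identified with their roots, else the node labels conflict; the depth count $(N-1)T$ you use is the correct one).
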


We prove this result in \cref{pf:vnh}.

\subsection{Tightness of sequential fat-shattering dimension of quantum states}
\label{sec:tightness_gen}

Having considered several restricted settings in the previous sections, we now turn to the most general formulation of the online quantum state learning problem. Formally, we define the sample space to be the space of all 2-outcome measurements $\mathcal{X}$, while the Hypothesis class is given as $\mathcal{H}_n$. Our objective is to derive a tight lower bound on $\text{sfat}_\delta (\mathcal{H}_n, \mathcal{X})$. To accomplish this, we will use the techniques developed in the previous sections. In addition, we will establish new results on completion of partial matrices, which are essential for our analysis. We start with the latter. 

Let us first introduce a few necessary definitions on partial matrices and their completions. A partial matrix is a matrix in which certain entries are specified while the other entries are free to be chosen. It is called partial symmetric if it is symmetric on the specified entries. And a completion of a partial matrix refers to a specific assignment of values to its unspecified entries. The main result we will use is the following, proved in \cref{pf:lemma_completion}.

\begin{lemma}\label{lemma_completion}
    Any real partial symmetric matrix $\omega$ satisfying the following conditions
    \begin{enumerate}
        \item $w_{11}=\frac{1}{2}$, and $w_{ii}=\frac{1}{2(N-1)}$ $\forall i\in\llbracket2,N\rrbracket$,
        \item Elements are specified on the set $\{w_{1i}, w_{i1}\}$, where $i\in [N]$,
        \item $\forall i\in\llbracket2,N\rrbracket,|w_{1i}|\leq\frac1{2\sqrt{N-1}}$,
    \end{enumerate}
can be completed to a density matrix. We will denote $\text{part} (w_{12}, w_{13}, \cdots, w_{1N})$ such a matrix.
\end{lemma}

We now derive the lower bound for $\text{sfat}_\delta (\mathcal{H}_n, \mathcal{X})$. The key idea is to construct a new tree analogous to the Von Neumann halving tree, with a crucial distinction that it accommodates more general measurements, extending beyond the $\vonneu i$ type measurements that have been considered thus far. Furthermore, given this tree, we apply \cref{lemma_completion} to ensure that all paths on the tree can be associated to a valid density matrix. We show that this allows us to obtain the quadratic dependence on $\frac{1}{\delta}$ in the lower bound of $\text{sfat}_\delta (\mathcal{H}_n, \mathcal{X})$, thus establishing the almost tightness.
\begin{theorem}\label{thm:final}
    Let $\mathcal{X} = \{E\in\mathrm{Herm}_{\mathbb{C}}(2^n), \operatorname{Spec}(E)\subset[0,1]\}$ be the sample space.
    Define $\mathcal{H}_n=\{\operatorname{Tr}_\omega, \omega\in\mathcal C_n\}$ as the hypothesis class, where $\mathcal{C}_n$ is the set of all $n$-qubit quantum states. Then, for $\delta=2^{-\frac n\eta}$, we have $\text{sfat}_\delta(\mathcal H_n, \mathcal{X})=\Omega(\frac n{\delta^\eta}), \forall\eta < 2$.
\end{theorem}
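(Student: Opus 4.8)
The plan is to build on the Von Neumann Halving tree construction of \cref{theo:vnh} but replace the rigid $\vonneu i$ measurements with a richer family of two-outcome measurements, so that the halving branching can be carried out simultaneously along \emph{two} orthogonal directions rather than one. Concretely, at each level of the outer ``Von Neumann-like'' skeleton we want to pair up basis indices $i$ and use measurements of the form $E=\frac12(\vonneu i+\vonneu j)+\alpha\,(\meas i j)$ for suitable small $\alpha$, i.e.\ measurements that probe an off-diagonal entry $w_{ij}$ in addition to the diagonal. Each such $E$ lets the learner's expectation value $\operatorname{Tr}(E\omega)$ depend on an entry we get to choose freely, and by running a halving sub-tree of depth $\sim n$ on \emph{each} of the $\sim N$ available off-diagonal coordinates, we gain an extra factor relative to \cref{theo:vnh}: there the single diagonal coordinate per block gave $\Omega(n/\delta^\eta)$ for $\eta<1$; exploiting the off-diagonal coordinate at resolution $\delta\sim 1/\sqrt N$ squares the usable range, pushing the exponent from $\eta<1$ up to $\eta<2$.

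The main steps, in order. First I would fix the orthonormal basis $(\vect 0,\dots,\vect{N-1})$, and for the real-valued part of the tree prescribe the diagonal profile $w_{11}=\tfrac12$, $w_{ii}=\tfrac1{2(N-1)}$ exactly as in \cref{lemma_completion}, reserving the first basis vector as a ``pivot''. Second, I would describe the $\mathcal X$-valued skeleton: a caterpillar-type tree whose edges at depth block $k$ carry a halving sub-tree $\mathbf T_h$ acting on the coordinate $w_{1,k+1}$, where the measurement used in that sub-tree is $E_k=\vonneu 1+\text{(off-diagonal term in direction }\vect{k+1})$ rescaled into $\operatorname{Spec}\subset[0,1]$; the halving sub-tree forces the value $w_{1,k+1}$ to be pinned down to within $\delta$ inside the allowed interval $|w_{1,k+1}|\le \tfrac1{2\sqrt{N-1}}$, which has width $\Theta(1/\sqrt N)$, hence contributes $\Omega(\log_2(1/(\delta\sqrt N)))$ shattered levels per block, and there are $N-1$ blocks. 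Third — the crucial point — for every path $\boldsymbol\epsilon$ through this tree, the path determines a full assignment of all the specified entries $w_{1i}$, all within the bound of hypothesis (3) of \cref{lemma_completion}; invoking that theorem yields an actual density matrix $\boldsymbol\omega(\boldsymbol\epsilon)\in\mathcal C_n$ realizing exactly those expectation values, so the tree is genuinely $\delta$-shattered by $\mathcal H_n$. Fourth, I would total the depth: summing $N-1$ blocks each of depth $\Theta(\log_2(1/(\delta\sqrt N)))$ and substituting $\delta=2^{-n/\eta}$, $N=2^n$, gives depth $\Omega\!\big(2^n\cdot(n/\eta-n/2)\big)=\Omega(n/\delta^\eta)$ precisely when $\eta<2$, which is the claimed bound.

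I expect the main obstacle to be the simultaneous verification of three things about the chosen measurement family: that each $E_k$ is a legitimate two-outcome measurement ($0\preceq E_k\preceq\mathbf 1$) \emph{after} the rescaling needed to keep the halving values inside $[0,1]$; that the rescaling does not destroy the ``halving'' geometry (the real-valued tree in \cref{fig:halving_tree} must still be reproducible as $\operatorname{Tr}(E_k\,\boldsymbol\omega)$ up to an affine change of variables absorbed into $\mathbf v$); and that the off-diagonal entries touched by different blocks do not interfere — here the pivot trick (only entries in row/column $1$ are ever specified, exactly the hypothesis of \cref{lemma_completion}) is what makes the blocks independent, but one must check the halving sub-tree for block $k$ only constrains $w_{1,k+1}$ and leaves $w_{1,j}$ for $j\neq k+1$ free until their own block. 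A secondary technical point is bookkeeping the constants: the interval width $\tfrac1{\sqrt{N-1}}$ versus resolution $\delta$ must be handled so that the $\log_2(1/(\delta\sqrt N))$ term is $\Theta(n/\eta-n/2)$ and stays positive, which is exactly the source of the $\eta<2$ restriction and mirrors how $\eta<1$ arose in \cref{theo:vnh}.
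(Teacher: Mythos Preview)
Your proposal is correct and follows essentially the same route as the paper: the paper uses the measurements $E_{0,i}=\tfrac12(\vonneu 0+\vonneu i+\meas 0 i)$ to probe the off-diagonal entries $w_{1i}$, runs a halving sub-tree of depth $T\approx\log_2(1/(\delta\sqrt N))$ on each of the $N-1$ blocks, and invokes \cref{lemma_completion} on the resulting partial matrix $\text{part}(a_1,\dots,a_{N-1})$ to obtain $\boldsymbol\omega(\boldsymbol\epsilon)$, arriving at the same arithmetic and the same $\eta<2$ restriction you identify. The only cosmetic difference is that the paper fixes $\alpha=\tfrac12$ throughout (so each $E_{0,i}$ is the rank-one projector onto $\tfrac1{\sqrt2}(\vect0+\vect i)$), which automatically handles your first anticipated obstacle about $0\preceq E\preceq\mathbf 1$.
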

We prove this result in \cref{pf:final}. Note that this result directly implies tightness of regret for $L_1$ loss \citep{aaronson2019online}, aswell as for $L_2$ loss as shown in \cref{pf:cor}.

\begin{remark}

    As mentioned in the beginning of this section, our method for lower bounding the sequential fat-shattering dimension differs from that employed in \citet{aaronson2007learnability}, and is adaptable to various restricted online quantum state learning settings. For cases involving further restrictions, whether on the sample space, the hypothesis class or the relationship between $\delta$ and $n$ (Recall that \citet{aaronson2007learnability} required $\delta\geq\sqrt{n2^{-(n-5)/35}/8}$) we conjecture that the techniques from \cref{proof:final} involving matrix completion would serve as a valuable foundation.

\end{remark}




\section{Online Learning of Pure States is as Hard as Mixed States}\label{sec:main}

While the tightness of the sequential fat-shattering dimension, and hence of the minimax regret were already known, we emphasize that the results were derived for a hypothesis class being induced by the set of all $n$-qubit quantum states. In this section, we consider a more restrictive setting where the hypothesis class is induced solely by the set of all $n$-qubit pure states. The derivation closely follows the techniques developed in the previous section with a notable distinction: whereas the previous sections relied on matrix completion results (without defining the states $\boldsymbol{\omega}(\boldsymbol{\epsilon})$ explicitly), here we shall construct the states $\boldsymbol{\omega}(\boldsymbol{\epsilon}) = \vonneu{\psi(\boldsymbol{\epsilon})}$ explicitly. We show that the bounds on the sequential fat-shattering dimension and consequently the minimax regret remain tight in this setting.

\begin{theorem} \label{theo:final_pure}
    Let $\mathcal{X} = \{E\in\mathrm{Herm}_{\mathbb{C}}(2^n), \operatorname{Spec}(E)\subset[0,1]\}$ be the sample space. Define $\mathcal{H}=\{\operatorname{Tr}_\omega, \omega\in\mathcal C_n, \mathrm{Tr}[\omega^2]=1\}$ as the hypothesis class, where $\mathcal{C}_n$ is the set of all $n$-qubit quantum states. Then, for $\delta=2^{-\frac n\eta}$, we have $\text{sfat}_\delta(\mathcal H, \mathcal{X})=\Omega(\frac n{\delta^\eta}), \forall\eta < 2$.
\end{theorem}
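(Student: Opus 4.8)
The plan is to rerun the proof of \cref{thm:final} essentially unchanged, replacing its single non-constructive ingredient -- the matrix-completion result \cref{lemma_completion} -- by an explicit rank-one construction. Recall that the proof of \cref{thm:final} uses a tree of the same shape as the Von Neumann halving tree of \cref{fig:vn_halving_tree}, whose spine walks along the first row of the density matrix: at spine position $i$ it fixes a measurement that reads off the off-diagonal entry $w_{1i}$, and attaches there a rescaled copy of the halving tree $\mathbf T_h$ living inside the window of admissible values of $w_{1i}$. For mixed states, \cref{lemma_completion} makes that window have half-width $\Theta(1/\sqrt N)$ while keeping the diagonal pinned, so each halving sub-tree can be taken of depth $T=\Theta\big(\log_2(1/\delta)-n/2\big)$, and chaining $\Theta(N)$ of them gives $\operatorname{sfat}_\delta=\Omega(NT)$, which for $\delta=2^{-n/\eta}$ and $\eta<2$ equals $\Omega(n/\delta^\eta)$.

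For the present theorem I would keep exactly this tree, take the $N-2$ spine coordinates $i\in\llbracket2,N-1\rrbracket$ with the concrete measurements $E_i=\tfrac12\mathbf 1+\tfrac12(\meas{1}{i})$ (eigenvalues in $\{0,\tfrac12,1\}\subset[0,1]$, so $E_i\in\mathcal X$, and $\operatorname{Tr}(E_i\vonneu\psi)=\tfrac12+\operatorname{Re}(\bar\psi_1\psi_i)$), and, for each path $\boldsymbol\epsilon$, define the hypothesis as the real pure state
\[
\ket{\psi(\boldsymbol\epsilon)}=\tfrac1{\sqrt2}\,\vect{1}+\sum_{i=2}^{N-1}c_i(\boldsymbol\epsilon)\,\vect{i}+s(\boldsymbol\epsilon)\,\vect{N},
\]
where $c_i(\boldsymbol\epsilon)\in[-\gamma,\gamma]$, $\gamma:=1/\sqrt{2(N-2)}$, is the amplitude prescribed inside block $i$ by the rescaled halving sub-tree (so that $\operatorname{Tr}(E_i\vonneu{\psi(\boldsymbol\epsilon)})=\tfrac12+c_i(\boldsymbol\epsilon)/\sqrt2$), and $s(\boldsymbol\epsilon):=\big(\tfrac12-\sum_{i=2}^{N-1}c_i(\boldsymbol\epsilon)^2\big)^{1/2}$ is a ``slack'' amplitude placed on the unused coordinate $\vect{N}$. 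The weight $\tfrac12$ on $\vect{1}$ is picked because it maximizes $|\psi_1|\sqrt{1-|\psi_1|^2}$, i.e. the attainable range $\big[-\gamma/\sqrt2,\gamma/\sqrt2\big]$ of $w_{1i}=\psi_1\psi_i$; since $\gamma/\sqrt2=1/(2\sqrt{N-2})$, this window keeps the same half-width $\Theta(1/\sqrt N)$ as the one produced by \cref{lemma_completion}, so no factor is lost relative to \cref{thm:final}.

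The verification then amounts to three routine checks. First, $\ket{\psi(\boldsymbol\epsilon)}$ is a genuine pure state (real, hence in $\mathcal H$): $\|\psi(\boldsymbol\epsilon)\|^2=\tfrac12+\sum_i c_i(\boldsymbol\epsilon)^2+s(\boldsymbol\epsilon)^2=1$ by construction, and $s(\boldsymbol\epsilon)^2\ge\tfrac12-(N-2)\gamma^2=0$, so the amplitude on $\vect{N}$ is well defined along every path. Second, $\vect{N}$ is measured by no $E_i$ and the blocks along the spine are structurally independent, so $\operatorname{Tr}(E_i\vonneu{\psi(\boldsymbol\epsilon)})=\tfrac12+c_i(\boldsymbol\epsilon)/\sqrt2$ depends only on the restriction of $\boldsymbol\epsilon$ to block $i$, and $c_i$ may be chosen freely in $[-\gamma,\gamma]$ regardless of the other blocks (the slack stays nonnegative); feeding this into the single-measurement halving-tree argument of \cref{theo:single_meas}, run over a window of half-width $\gamma/\sqrt2=\Theta(1/\sqrt N)$, shows that these states $\delta$-shatter block $i$, which has depth $T=\Theta\big(\log_2(1/\delta)-n/2\big)$. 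Third, for $\delta=2^{-n/\eta}$ with $\eta<2$ fixed one has $T=\Theta\big(\tfrac{2-\eta}{2\eta}\,n\big)=\Theta(n)$, so concatenating the $N-2$ blocks gives a $\delta$-shattered tree of depth $(N-2)T=\Theta(nN)=\Theta(n2^{n})=\Theta(n/\delta^\eta)$, hence $\operatorname{sfat}_\delta(\mathcal H,\mathcal X)=\Omega(n/\delta^\eta)$.

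I expect the only genuine obstacle to be the step that forces abandoning \cref{lemma_completion}: for a rank-one matrix the diagonal and the first row are rigidly tied by $|w_{1i}|^2=w_{11}w_{ii}$, so one cannot keep the diagonal pinned while sweeping $w_{1i}$ across an interval, the way the mixed-state argument does. The construction above circumvents this by letting $w_{ii}=c_i(\boldsymbol\epsilon)^2$ float together with $w_{1i}=c_i(\boldsymbol\epsilon)/\sqrt2$ and dumping the leftover weight into the slack coordinate; the two points that need care are (i) checking that the per-coordinate budget $\gamma=\Theta(1/\sqrt N)$ still leaves every window of $w_{1i}$-values with half-width $\Theta(1/\sqrt N)$ -- so the quadratic-in-$1/\delta$ (equivalently $\eta<2$) scaling is preserved -- and (ii) ensuring that halving simultaneously over all $N-2$ spine coordinates never over-spends the off-coordinate weight $1-|\psi_1|^2=\tfrac12$, which is precisely what the choice $\gamma^2=\tfrac1{2(N-2)}$ secures.
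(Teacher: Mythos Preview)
Your proposal is correct and takes essentially the same route as the paper: the paper's shattering state is likewise $\tfrac1{\sqrt2}\vect{0}+\sum_{i=1}^{N-2}a_i\vect{i}+(\text{slack})\vect{N-1}$, with the same Von-Neumann--halving tree chained over the $N-2$ free amplitudes, the same $\Theta(1/\sqrt N)$ per-coordinate window, and the same final count $(N-2)T=\Theta(n/\delta^\eta)$. The only difference is the measurement---the paper keeps the rank-one projector $E_{0,i}=\tfrac12(\vonneu0+\vonneu i+\meas0i)$, whose readout $\tfrac12(\psi_0+\psi_i)^2$ is quadratic in $a_i$ and forces an additional $\tfrac12 a_{t'+1}^2$ term into the witness $\mathbf w_t$, whereas your $E_i=\tfrac12\mathbf1+\tfrac12(\meas1i)$ gives the affine readout $\tfrac12+\psi_1\psi_i$ and sidesteps that correction.
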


Now, building on the result in \cref{theo:final_pure}, we proceed to demonstrate the tightness of the minimax regret $\mathcal{V}_T$, proving that it is asymptotically the same as for mixed states:

\begin{corollary}\label{cor:bis}
    Let $\mathcal{X} = \{E\in\mathrm{Herm}_{\mathbb{C}}(2^n), \operatorname{Spec}(E)\subset[0,1]\}$ be the sample space. Define $\mathcal{H}=\{\operatorname{Tr}_\omega, \omega\in\mathcal C_n, \mathrm{Tr}[\omega^2]=1\}$ as the hypothesis class, where $\mathcal{C}_n$ is the set of all $n$-qubit quantum states. Then we have $\mathcal{V}_T=\Omega(\sqrt{nT})$, assuming the loss function under consideration is the $L_1$-loss. This result still holds for the $L_2$ loss, as long as $T\leq4^n$.
\end{corollary}

We prove these results in \cref{pf:cor,pf:final_pure} respectively.

An important aspect to notice here is that the significance of the sequential fat-shattering dimension extends beyond regret analysis. Several corollaries follow from \cref{theo:final_pure}. One such example is the fact that Theorem 1 from \citet{aaronson2019online}, which has been shown to be optimal for mixed states, is also optimal for pure states.

\begin{corollary}
\label{cor:mistake_bds}
Let $\rho$ be an $n$-qubit mixed state, and let $E_1, E_2, \dots$ be a sequence of 2-outcome measurements that are revealed to the learner one by one, each followed by a value $b_t \in [0,1]$ such that $\lvert \mathrm{Tr}(E_t \rho) - b_t \rvert \leq \varepsilon/3$. Then there is an explicit strategy for outputting hypothesis states $\omega_1, \omega_2, \dots$ such that $\lvert \mathrm{Tr}(E_t \omega_t) - \mathrm{Tr}(E_t \rho) \rvert > \varepsilon$ for at most $O\left(\frac{n}{\varepsilon^2}\right)$ values of $t$. This mistake bound is almost asymptotically optimal, even if we restrict $\rho$ to be a pure state.
\end{corollary}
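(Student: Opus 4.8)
The plan is to derive this mistake bound directly from the almost-tight lower bound on the sequential fat-shattering dimension established in \cref{theo:final_pure}, combined with the known upper bound from \citet{aaronson2019online}. First I would recall that Theorem~1 of \citet{aaronson2019online} already gives the \emph{upper} side: there is an explicit algorithm (based on the matrix multiplicative weights / RFTL update) that, against any sequence of measurements $E_t$ with noisy feedback $b_t$ satisfying $\lvert \operatorname{Tr}(E_t\rho) - b_t\rvert \le \varepsilon/3$, makes at most $O(n/\varepsilon^2)$ mistakes, where a mistake at round $t$ means $\lvert \operatorname{Tr}(E_t\omega_t) - \operatorname{Tr}(E_t\rho)\rvert > \varepsilon$. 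Crucially this algorithm outputs hypotheses $\omega_t$ that are themselves mixed states, so the same bound applies verbatim when the true $\rho$ happens to be pure --- the upper bound is insensitive to the rank of $\rho$. Thus the only thing that needs proving is that this bound \emph{cannot} be improved below $\Omega(n/\varepsilon^2)$ even when $\rho$ is promised to be pure, i.e., an $\Omega(n/\varepsilon^2)$ lower bound on the number of forced mistakes against pure states.

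Next I would invoke the standard equivalence between mistake bounds in the realizable online setting and the sequential fat-shattering dimension. Concretely, if a learner is guaranteed to make at most $M$ mistakes (at scale $\varepsilon$) against every pure-state realizable sequence, then one can show $\operatorname{sfat}_{\delta}(\mathcal H, \mathcal X) = O(M)$ for $\delta \asymp \varepsilon$: an adversary who shatters a tree of depth $T$ can, by always descending toward the branch where the learner's prediction is wrong by at least $\delta/2$, force the learner into a mistake at every level, so $T \le M$ (up to constant factors absorbing the $\delta$ versus $\varepsilon$ discrepancy and the factor of $2$ in the shattering definition). Here $\mathcal H$ is the pure-state hypothesis class and $\mathcal X$ the full measurement set, exactly as in \cref{theo:final_pure}. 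Contrapositively, the lower bound $\operatorname{sfat}_{\delta}(\mathcal H,\mathcal X) = \Omega(n/\delta^{\eta})$ for $\delta = 2^{-n/\eta}$ and any $\eta < 2$ forces any learner to make $\Omega(n/\delta^{\eta})$ mistakes on some pure-realizable sequence; taking $\eta$ close to $2$ and setting $\varepsilon \asymp \delta$ yields the desired $\Omega(n/\varepsilon^2)$ (up to the subpolynomial slack inherent in $\eta < 2$, which matches the ``almost asymptotically optimal'' phrasing of the statement).

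The final step is bookkeeping: I would carefully align the threshold conventions --- \cref{theo:final_pure} is phrased with a $\delta/2$ gap in the shattering condition while the corollary uses threshold $\varepsilon$ and a noise tolerance $\varepsilon/3$ --- and note that the noise in $b_t$ only shrinks the effective gap by a constant factor, so it does not affect the asymptotics. I would also point out that the regime restriction $\delta = 2^{-n/\eta}$ from \cref{theo:final_pure} translates into the statement that the bound is \emph{asymptotically} (rather than uniformly) optimal, which is why the corollary says ``almost asymptotically optimal''; this is exactly the caveat already flagged in the remark following \cref{theo:final_pure}. The main obstacle I anticipate is not any single hard estimate but rather making the mistake-bound$\leftrightarrow$sfat reduction fully rigorous in the adversarial (adaptive) setting --- one must be careful that the adversary constructing the shattering tree is allowed to choose $E_t$ adaptively based on the learner's past hypotheses, which is precisely what the sequential (as opposed to classical) fat-shattering dimension encodes, so the reduction goes through cleanly once the definitions are unwound.
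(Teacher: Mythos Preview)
Your proposal is correct and aligns with the paper's own (implicit) argument: the paper states \cref{cor:mistake_bds} as an immediate consequence of \cref{theo:final_pure} without giving a separate proof, and the route you sketch---quoting the $O(n/\varepsilon^2)$ upper bound from Theorem~1 of \citet{aaronson2019online} and then converting the $\operatorname{sfat}_\delta$ lower bound for pure states into a lower bound on forced mistakes via the standard shattering-tree adversary---is exactly the intended derivation. Your handling of the ``almost'' qualifier (tracing it to the $\eta<2$ restriction and the $\delta=2^{-n/\eta}$ regime in \cref{theo:final_pure}) is also the point the paper is making.
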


\section{Extensions of Online Learning of Quantum States}\label{sec:smooth}
\subsection{\texorpdfstring{The $\epsilon$-realizable setting}{The epsilon-realizable setting}}

    In practice, an experimenter does not have access to the exact value of \( \operatorname{Tr}(E_t \rho) \). Assuming completely adversarial feedback is overly pessimistic and renders the learning problem particularly challenging. This motivates the consideration of the \emph{\(\epsilon\)-realizable setting}, where the learner incurs the \(L_1\) loss with respect to a fixed quantum state \( \rho \), and receives noisy feedback \( y_t \) at each round. Specifically, the feedback \( y_t \) is an approximation of \( \operatorname{Tr}(E_t \rho) \) with error parametrized by \( \epsilon \). For instance, \( y_t \) may be drawn from a uniform distribution over the interval \( [\operatorname{Tr}(E_t \rho) - \epsilon, \operatorname{Tr}(E_t \rho) + \epsilon] \), or from a Gaussian distribution \( \mathcal{N}(\operatorname{Tr}(E_t \rho), \epsilon) \). More generally, any non-deterministic distribution \( \mu \) centered at \( \operatorname{Tr}(E_t \rho) \) is permissible. 

We show that even in this more favorable setting, the regret admits the same lower bound as in the fully adversarial case--\emph{asymptotically independent of \( \epsilon \)}.

\begin{align}
         \Bar{\mathcal{V}}_T = \Big< \inf_{\mathcal{Q} \in \Delta(\mathcal{C}_n)} \ \sup_{\mathcal{D}_t \in \mathcal{P}} \   \mathop{\mathbb{E}}_{{\omega_t} \sim \mathcal{Q}} \  \mathop{\mathbb{E}}_{E_t \sim \mathcal{D}_t}\mathop{\mathbb{E}}_{y_t\sim\mu} \Big>_{t=1}^T \sup_{\rho\in\mathcal C_n}
         \Big[ \sum_{t=1}^T |\operatorname{Tr} (E_t\omega_t )-\operatorname{Tr} (E_t\rho ) |\Big].
\end{align}

\begin{theorem}\label{theo:realizable}
    The minimax regret in the $\epsilon$-realizable setting satisfies $\Bar{\mathcal{V}}_T= \Theta(\sqrt{nT})$. In particular, these bounds are independent of $\epsilon$.
\end{theorem}

We prove this result in \cref{pf:realizable}.

\subsection{Smoothed Online Learning of Quantum States}

The online learning framework discussed thus far is fully adversarial, allowing the adversary to select arbitrary two-outcome measurements at each round $t$. However, as highlighted in \cref{sec:intro}, practical learning tasks often target specific properties of the quantum state $\rho$, rather than aiming for full state tomography. Within the PAC learning paradigm, such tasks are typically modelled by fixing a distribution $\mathcal{D}$ over the space of measurements, reflecting the learner's interest in a particular subset of observables.

To bridge the adversarial and distributional regimes, we adopt the lens of \emph{smoothed analysis}, which interpolates between worst-case and average-case models. Specifically, we require that at each round $t$, the adversary selects a distribution $\mathcal{D}_t$ over measurements that remains close to the target distribution $\mathcal{D}$. This models an adversary with limited power to perturb the measurement process.

This smoothed perspective is not merely a theoretical convenience --- it captures an important aspect of experimental practice. Indeed, quantum devices rarely implement measurements with perfect fidelity. Instead, small temporal fluctuations in control parameters, thermal drift, or calibration errors can cause the realized measurement to deviate slightly from the intended one~\cite{noise-ref}. From a learning-theoretic standpoint, these deviations can be modelled as stochastic perturbations: at each round, the performed measurement is sampled from a distribution that is $\varepsilon$-close to the nominal one. Thus, smoothed analysis offers a principled framework for reasoning about online learning in the presence of structured noise in the measurement process.


\begin{definition}[Smooth distributions]
    A distribution $\mu$ is said to be $\sigma$-smooth with respect to a fixed distribution $\mathcal{D}$ for a $\sigma \in (0,1]$ if and only if \citep{smooth_distrib_def}:
\begin{enumerate}
    \item $\mu$ is absolutely continuous with respect to $\mathcal{D}$, {\it i.e.} every measurable set $A$ such that $\mathcal{D}(A) = 0$ satisfies $\mu(A) = 0$

    \item The Radon Nikodym derivative $d\mu/d\mathcal{D}$ satisfies the following relation:
\begin{equation}
    \text{ess} \sup \frac{d \mu}{d \mathcal{D}} \leq \frac{1}{\sigma}.
\end{equation}
\end{enumerate}
\end{definition}

Let $\mathcal{B}(\sigma, \mathcal{D})$ be the set of all $\sigma$-smooth distributions with respect to $\mathcal{D}$. In smoothed online learning, the adversary is restricted by the condition $\mathcal{D}_t \in \mathcal{B} (\sigma, \mathcal{D})$. Note that in this setting we recover the case of an oblivious adversary for $\sigma = 1$, while we get the completely adversarial case for $\sigma \rightarrow 0$.


    
    


Recall the expression of minimax regret in \cref{eq:minmax_reg}. In the smoothed setting, the expression gets slightly modified accounting for the restriction imposed on the adversary:
\begin{equation}
\begin{aligned}
    \label{eq:minmax_reg_smooth}
     \mathcal{V}_T =& \Big< \inf_{\mathcal{Q} \in \Delta(\mathcal{H})} \  \sup_{\mathcal{D}_t \in \mathcal{B}(\sigma, \mathcal{D})} \  \mathop{\mathbb{E}}_{h_t \sim \mathcal{Q}} \  \mathop{\mathbb{E}}_{x_t \sim \mathcal{D}_t} \Big>_{t=1}^T \Big[ \sum_{t=1}^T \ell_t (h_t (x_t)) - \inf_{h \in \mathcal{H}} \sum_{t=1}^T \ell_t(h(x_t)) \Big].
     \end{aligned}
\end{equation}
Here, the key difference with \cref{eq:minmax_reg} is that $\mathcal{D}_t$ is now restricted to the set of all $\sigma$-smooth distributions with respect to $\mathcal{D}$ instead of all possible distributions on $\mathcal{X}$. We recall that for quantum state learning, we have $\mathcal{X} \subset \{E\in\mathrm{Herm}_{\mathbb{C}}(2^n), \operatorname{Spec}(E)\subset[0,1]\}$, $\mathcal{H}_n=\{\operatorname{Tr}_\omega, \omega\in\mathcal C_n\}$ and a target state $\rho$. For the sake of brevity, we will continue using the notation $x_t$ to indicate input data and $h$ to indicate the hypothesis. The derivation here closely follows the approach in \citet{block2022smoothed} (which derives the regret bounds for classical smoothed online supervised learning)  with one important difference; in the original derivation, for a given input $x_t \in \mathcal{X}$, the authors distinguish between a predicted label $\hat{y}_t \in \mathcal{Y}$ and $h_t(x_t) \in \mathcal{Y}$ where $\mathcal{Y}$ is the label space. We do not make this distinction as our labels are always related to our inputs via the hypothesis.

\begin{theorem}
\label{theo:smooth_ub}
 Let $\mathcal{X} = \{E\in\mathrm{Herm}_{\mathbb{C}}(2^n), \operatorname{Spec}(E)\subset[0,1]\}$ be the sample space. Define the hypothesis class $\mathcal{H}=\{\operatorname{Tr}_\omega, \omega\in\mathcal C_n, \mathrm{Tr}[\omega^2]=1\}$ as the set of n-qubit pure states, where $\mathcal{C}_n$ is the set of all $n$-qubit quantum states.  Furthermore let $\sigma \in (0,1]$ be the smoothness parameter. Then we have $\mathcal{V}_T = O\Bigg(\sqrt{\frac{nT \log T}{\sigma}} \ \Bigg).$
\end{theorem}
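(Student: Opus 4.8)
The plan is to follow the coupling-based reduction of smoothed to statistical online learning from \citet{block2022smoothed, Haghtalab2024}, feeding in the quantum complexity estimate $\text{sfat}_\delta(\mathcal H_n,\mathcal X)=O(n/\delta^2)$ of \citet{aaronson2019online}; since $\mathcal H\subseteq\mathcal H_n$, the same bound controls $\text{sfat}_\delta(\mathcal H,\mathcal X)$.

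The first step is to use \cref{def:coupling}. The learner's algorithm, at each round $t$, draws a fresh batch $Z_t^1,\dots,Z_t^k$ of i.i.d.\ samples from $\mathcal D$; the coupling provides a joint law under which the adversary's realized measurement satisfies $x_t\in\{Z_t^j\}_{j\in[k]}$ for every $t\in[T]$ with probability at least $1-Te^{-\sigma k}$. Choosing $k=\Theta\big(\tfrac1\sigma\log\tfrac{T}{\sigma}\big)=\tilde\Theta(1/\sigma)$ makes this failure probability so small that, since the loss is bounded (for the $L_1$ loss it lies in $[0,1]$), the ``bad event'' contributes only $O(1)$ to the expected regret. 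On the ``good event'' every input lies in the fixed random grid $\mathcal G=\bigcup_{t\le T}\{Z_t^j\}_{j\in[k]}$, whose at most $kT=\tilde\Theta(T/\sigma)$ points are i.i.d.\ from $\mathcal D$; the residual problem is thus an online learning problem over $\mathcal G$ whose data behaves like a near-i.i.d.\ sample, and whose comparator $\inf_{\omega\in\mathcal H}\sum_t\ell_t$ loses nothing by being restricted to $\mathcal G$, since $x_t\in\mathcal G$ there.

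The second step is to bound the regret of this residual problem by running the aggregation/chaining learner underlying \cref{eq:fin_class_bnd} over $\mathcal G$ against $\mathcal H|_{\mathcal G}$, and plugging in $\text{sfat}_\delta(\mathcal H,\mathcal G)\le\min\{\text{sfat}_\delta(\mathcal H_n,\mathcal X),\,|\mathcal G|\}=O(n/\delta^2)$. Carrying the batch size $k=\tilde\Theta(1/\sigma)$ and the logarithmic entropy factor $\log(|\mathcal G|/\delta)$ through the chaining integral $\int_\alpha^1\delta^{-1}\sqrt{n\log(|\mathcal G|/\delta)}\,d\delta$ and optimizing the discretization scale $\alpha$ (taken polynomially small) turns the $\Theta(\sqrt n\,/\delta)$ behaviour of $\sqrt{\text{sfat}_\delta}$ into the claimed rate; combining with the $O(1)$ bad-event term gives $\mathcal V_T=O\big(\sqrt{nT\log T/\sigma}\,\big)$, and $\bar{\mathcal V}_T\le\mathcal V_T$ extends it to the label-free minimax regret.

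I expect the main obstacle to be the reduction itself — converting the coupling into a rigorous statement about the \emph{minimax} value of an adaptive adversary. One must (i) describe the learner so that the auxiliary batches $Z_t^j$ are internal randomness that does not alter the marginals defining $\mathcal V_T$; (ii) handle the interaction between conditioning on the good event and the adversary's adaptivity in the minimax expression \cref{eq:minmax_reg_smooth}; and (iii) transport the chaining regret bound, which is stated for the worst-case online value, to the conditioned transductive problem on $\mathcal G$ while faithfully tracking how the batch size $k=\tilde\Theta(1/\sigma)$ propagates into the final $1/\sigma$ factor. All of this is the quantum instantiation of the argument in \citet{block2022smoothed}; the only genuinely new input is the quantum fat-shattering estimate, which enters as a black box.
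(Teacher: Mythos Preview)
Your proposal is correct and rests on the same two ingredients as the paper: the coupling reduction of \citet{block2022smoothed,Haghtalab2024} with $k=\Theta(\log T/\sigma)$, and Aaronson's $O(n/\delta^2)$ fat-shattering bound fed into a Dudley chaining integral. The execution differs slightly: you describe an algorithmic reduction (run a learner over the random grid $\mathcal G$ and invoke \cref{eq:fin_class_bnd} with $\text{sfat}_\delta$), whereas the paper works purely through Rademacher complexities --- first $\mathcal V_T\le 2LT\,\mathfrak R_T(\mathcal H)$, then the coupling lemma $\mathfrak R_T(\ell\circ\mathcal H,\mathcal D_\wedge)\le T^2e^{-\sigma k}+\mathcal R_{kT}(\ell\circ\mathcal H)$, then the i.i.d.\ chaining bound with the \emph{ordinary} (non-sequential) $\text{fat}_\delta$. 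The Rademacher route has the advantage of dissolving exactly the three obstacles you flag: no explicit learner is run, so there is no internal randomness to account for; the adaptivity of the adversary is absorbed in the sup over $\mathcal D_\wedge$ before passing to $\mathcal R_{kT}$; and the $k$-dependence enters transparently as the sample size of the i.i.d.\ Rademacher complexity rather than through a transductive restriction. Your route would also go through, but with more bookkeeping at precisely the points you identify.
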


We prove this result in \cref{pf:smooth}.

\section{Conclusion and Limitations}\label{sec:concl}

\textbf{Lower bounds on fat-shattering dimension:} In this work, we established lower bounds on sequential fat-shattering dimension of various subproblems within the quantum state learning framework. Crucially, we showed that pure and mixed states almost share the same asymptotical dimension. Note that, although our construction directly implies that the regular $\delta$-fat-shattering dimension of pure states scales as $\Omega(\frac 1{\delta^2})$, whether we can recover $\Omega(\frac n{\delta^2})$ for pure $n$-qubits in the offline setting remains an open question.  

\textbf{Consequences on regret:} This lower bound on sequential fat-shattering dimension has several implications, including the key result that learning pure and mixed states in the online setting will incur the same asymptotical regret for the $L_1$-loss. However, the lower bound for the $L_2$ loss might leave room for improvement. Additionally, sequential fat-shattering dimension may serve as a fundamental tool for deriving bounds on other key complexity measures in quantum state learning.

\textbf{Smoothed online learning:} Finally, we extend our analysis from standard online learning of quantum states to the smoothed online learning setting. To our knowledge, this work represents the first application of smoothed analysis to quantum state learning. In this setting, we establish an upper bound on the regret. However a key open question is whether this bound is tight.

\textbf{Possible extensions to classical learning theory:} While a general (mixed) quantum state can be represented as a positive semi-definite Hermitian matrix of rank (up to) $2^n$, a pure state can be represented as a positive semi-definite Hermitian matrix of rank $1$. The fact that this restriction on the rank of a matrix has no impact on the sequential fat-shattering dimension lower bound (and hence the regret lower bound) is highly non-intuitive and does not seem to apply in varied settings of classical learning: \citet{vc_rank} shows that rank of sign matrices impact VC dimension, \citet{lora_fatshat} shows that the $\epsilon$-fat shattering dimension of learned SCW matrices of rank $k$ is
  $
  O\left(n \cdot \left(m + k \log\left(\frac{n}{k}\right) + \log\left(\frac{1}{\epsilon}\right)\right)\right)
  $, and \citet{pseudodim} shows that the pseudodimension of matrices scale linearly with the rank.

   The reason why this scaling of dimension with rank vanishes in quantum is the additional assumptions on the density matrices, which are semi-definite positive, Hermitian, and of trace 1. Such matrices have however been extensively studied in Learning Theory \citep{NIPS2004_bd70364a, NIPS2008_605ff764, NIPS2009_051e4e12}. In addition, common matrices satisfying those conditions are normalized covariance and kernel matrices, and Laplacian matrices differ only for the fact that their trace is constant equal to $2m$, where $m$ is the number of edges of the associated graph. We leave as future work to see how our results translate to these other settings of interest.

\newpage

\section*{Acknowledgements}

The work of Maxime Meyer, Soumik Adhikary, Naixu Guo, and Patrick Rebentrost was supported by the National Research Foundation, Singapore and A*STAR under its CQT Bridging Grant and its Quantum Engineering Programme under grant NRF2021-QEP2-02-P05. Authors thank Josep Lumbreras, Aadil Oufkir, and Jonathan Allcock for their insightful discussions and kind suggestions.

\bibliographystyle{plainnat}
\bibliography{references}


\appendix




\newpage

\section{Proof of \texorpdfstring{\cref{theo:single_meas}}{Theorem X}}\label{pf:single_meas}

\begin{proof}
    Without loss of generality, we set $\mathcal{X} = \{\vonneu i\}$,  with $i\in\llbracket 0,N-2 \rrbracket$. Define $\mathbf{x}$ as the complete binary tree of depth $T$ such that $\forall t\in[T],\forall \ \boldsymbol{\epsilon} \in \{\pm 1\}^{T-1}$,
    \begin{equation}
        \mathbf{x}_t(\boldsymbol{\epsilon}) = \vonneu i.
    \end{equation}
    Furthermore, define $\mathbf v$ (\cref{fig:halving_tree}) as the complete binary tree of depth $T$ such that $\forall t\in[T],\forall \ \boldsymbol{\epsilon} \in \{\pm 1\}^{T-1}$,
    \begin{equation}
        \mathbf v_t(\boldsymbol\epsilon)=\frac1{2^t}\sum_{k=0}^{t-1}\epsilon_{k}2^{t-k-1}=\sum_{k=0}^{t-1}\epsilon_{k}2^{-k-1}.
    \end{equation}
    Here, we set $\epsilon_0=1$.

Given $(\mathbf x,\mathbf v)$, we now set:
\begin{equation}
    \ket{\psi(\boldsymbol\epsilon)}=\sqrt{\sum_{k=0}^{T-1}\epsilon_{k}2^{-k-1}}\vect i+\sqrt{1-\sum_{k=0}^{T-1}\epsilon_{k}2^{-k-1}}\ket{\perp}.
\end{equation}
    Then, for $T=\lfloor\operatorname{log}_2(\frac1\delta)\rfloor$ (which implies $\delta\le\frac1{2^T}$), $\forall \ \boldsymbol{\epsilon} \in \{\pm 1\}^{T-1}, \forall t \in [T]$, we have:
    \begin{align}
    \label{eq:halving_proof}
        \epsilon_t [\operatorname{Tr}_{ \boldsymbol\omega(\boldsymbol \epsilon)}(\mathbf{x}_t(\boldsymbol{\epsilon})) - \mathbf{v}_t (\boldsymbol{\epsilon})] 
        &= \epsilon_t \left[\sum_{k=0}^{T-1}\epsilon_{k}2^{-k-1}-\sum_{k=0}^{t-1}\epsilon_{k}2^{-k-1}\right]
        \nonumber\\&=\epsilon_t\sum_{k=t}^{T-1}\epsilon_{k}2^{-k-1}\nonumber\\&=2^{-t-1}+\epsilon_t\sum_{k=t+1}^{T-1}\epsilon_{k}2^{-k-1}\nonumber\\&\geq2^{-t-1}-\sum_{k=t+1}^{T-1}2^{-k-1}\nonumber\\&=2^{-t-1}-(2^{-t-1}-2^{- T})
        \nonumber\\&\geq \delta,
    \end{align}
    where the first inequality follows from the minimum value that the term $\epsilon_t\sum_{k=t+1}^{T-1}\epsilon_{k}2^{-k-1}\nonumber$ can take, and the last inequality is by direct computation and the assumption $T=\lfloor\operatorname{log}_2(\frac1\delta)\rfloor$. Thus, we show that the set $\mathcal{X}$ is $\delta$-shattered by the hypothesis class $\mathcal{H}_n$ with $\text{sfat}_\delta (\mathcal{H}_n, \mathcal{X}) = \Omega(\log_2(\frac1\delta))$. 
\end{proof}

We can replace $\mathbf{v}$ by a slightly modified version of itself, where each node is scaled by a factor $\frac1N$. Therefore, the quantum state associated to a path  has an amplitude corresponding to $\vect i$: $\operatorname{Tr}(\vonneu i \boldsymbol\omega(\boldsymbol\epsilon))$ bounded by $\frac1N$. We will write $\mathbf{T}_h [i, T]=(\mathbf x,\mathbf v)$ the resulting pair-valued tree and call it the Halving Tree of depth $T$. The index $i$ indicates the Von Neumann measurement associated to $\mathbf{x}$.


\section{Proof of \texorpdfstring{\cref{theo:uniform}}{Theorem X}}\label{pf:uniform}

\begin{proof}
Let $T\in[N]$.
Define $\mathbf{v}$ as the complete binary tree of depth $T$ such that $\forall t\in[T],\forall \ \boldsymbol{\epsilon} \in \{\pm 1\}^{T-1}$,
\begin{equation}
    \mathbf{v}_t (\boldsymbol{\epsilon}) = \frac{1}{2T}. 
\end{equation}
Furthermore, denote $\mathbf x$ (\cref{fig:vn_tree}) the complete binary tree of depth $T$ such that $\forall t\in[T],\forall \ \boldsymbol{\epsilon} \in \{\pm 1\}^{T-1}$,
\begin{equation}
    \mathbf{x}_t(\boldsymbol{\epsilon}) = \ket{t-1} \bra{t-1}.
\end{equation}
We refer to the pair $(\mathbf{x}, \mathbf{v})$ as the Von-Neumann tree $\mathbf{T}_{vn}$.

Given the Von Neumann tree, we now associate each path $\boldsymbol{\epsilon}$ to a pure quantum state:
\begin{equation}
\ket{\psi(\boldsymbol\epsilon)}=\frac 1{\sqrt{K+1}}\sum_{i=0}^{T-2} \mathbf{1}_{\epsilon_{i+1}=1}\vect i+\frac{1}{\sqrt{K+1}}\vect{N-1},
\end{equation}
where $K=\sum_{i=0}^{T-2}\mathbf{1}_{\epsilon_{i+1}=1}$, and $\mathbf{1}_{\epsilon = 1}$ is an indicator function which takes value $1$ if $\epsilon = 1$ and $0$ otherwise. Then, for $\delta\le \frac1{2T}$, $\forall \ \boldsymbol{\epsilon} \in \{\pm 1\}^{T-1}, \forall t \in [T]$:
\begin{align}
\label{eq:von_neumann_proof}
    \epsilon_t [\operatorname{Tr}_{\boldsymbol \omega(\boldsymbol \epsilon)}(\mathbf{x}_t(\boldsymbol{\epsilon})) - \mathbf{v}_t (\boldsymbol{\epsilon})] 
    &= \epsilon_t \left[\frac{\mathbf{1}_{\epsilon_{t}=1}}{K+1}-\frac1{2T}\right]
    \nonumber\\&\ge\delta.
\end{align}
 Thus, we show that the set $\mathcal{X}$ is $\delta$-shattered by the hypothesis class $\mathcal{H}$ with $\text{sfat}_\delta(\mathcal H, \mathcal{X})=\Omega(\min(\frac1\delta,2^n))$.

\end{proof}


\section{Proof of \texorpdfstring{\cref{theo:vnh}}{Theorem X}}\label{pf:vnh}

\begin{proof}
Without any loss of generality, we can chose the sample space to be a set of Von Neumann measurements $\mathcal{X} = \{ \vonneu{i}: i \in \llbracket 0, N-1 \rrbracket$\}. Denote $t'=\lfloor \frac {t-1} T \rfloor$ and $\tilde t=t-1-Tt'$. We then define $\mathbf{x}$ as the complete binary tree of depth $T(N-1)$ such that $\forall t\in[T(N-1)],\forall \ \boldsymbol{\epsilon} \in \{\pm 1\}^{T(N-1)-1}$,
\begin{equation}
\label{eq:x_vnh}
    \mathbf x_t(\boldsymbol\epsilon)=\vonneu {t'}.
\end{equation}
Furthermore, denote $\mathbf{v}$ the complete binary tree of depth $T(N-1)$ such that $\forall t\in[T(N-1)],\forall \ \boldsymbol{\epsilon} \in \{\pm 1\}^{T(N-1)-1}$,
\begin{equation}
\label{eq:v_vnh}
    \mathbf v_t(\boldsymbol\epsilon)=\frac1{2^{n+1}}(1+\sum_{k=1}^{\tilde t}\epsilon_{k+Tt'}2^{-k})
\end{equation}

We refer to the pair $(\mathbf{x}, \mathbf{v})$ as the Von Neumann halving tree $\mathbf{T}_{vnh}$. The name follows from the fact that the trees $\mathbf{x}$ and $\mathbf{v}$ as defined in \cref{eq:x_vnh,eq:v_vnh} can be constructed by replacing each node on the $t$-th layer of $\mathbf{T}_{vn}$, for both $\mathcal{X}$ valued and real valued parts, by the corresponding parts of the halving tree $\mathbf{T}_h [t, T]$.

Given the Von Neumann halving tree we can now associate each path $\boldsymbol{\epsilon}$ to a pure quantum state:
\begin{equation}
\ket{\psi(\boldsymbol\epsilon)}=\sum_{i=0}^{N-2}\sqrt{a_i}\vect i+\sqrt{1-\sum_{i=0}^{N-2}a_i}\vect{N-1}
\end{equation}
where $\forall i\in\llbracket0,N-2\rrbracket$,
\begin{equation}
    a_i=\mathbf v_{(i+1)T}(\boldsymbol\epsilon)+\frac1{2^{T+n+1}}\epsilon_{(i+1)T}.
\end{equation}
Then, for $\delta\le\frac1{2^{n+T+1}}$, we can show that $\forall \ \boldsymbol{\epsilon} \in \{\pm 1\}^{T(N-1)-1}, \forall t \in [T(N-1)]$:
    \begin{align}
    \label{eq:von_neumann_halving_proof}
        \epsilon_t [\operatorname{Tr}_{\boldsymbol \omega(\boldsymbol \epsilon)}(\mathbf{x}_t(\boldsymbol{\epsilon})) - \mathbf{v}_t (\boldsymbol{\epsilon})] 
        &= \epsilon_t[\mathbf v_{(t'+1)T}(\boldsymbol\epsilon)+\frac1{2^{T+n+1}}\epsilon_{(t'+1)T}-\frac1{2^{n+1}}(1+\sum_{k=1}^{\tilde t}\epsilon_{k+Tt'}2^{-k})]
        \nonumber\\&=\epsilon_t[\frac1{2^{n+1}}(1+\sum_{k=1}^{T}\epsilon_{k+Tt'}2^{-k})-\frac1{2^{n+1}}(1+\sum_{k=1}^{\tilde t}\epsilon_{k+Tt'}2^{-k})]
        \nonumber\\&=\frac{\epsilon_t}{2^{n+1}}(\sum_{k=\tilde t+1}^{T}\epsilon_{k+Tt'}2^{-k})
        \nonumber\\&=\frac{1}{2^{n+1}}(\frac{1}{2^{\tilde t+1}}+\sum_{k=\tilde t+2}^{T}\epsilon_{k+Tt'}2^{-k})
        \nonumber\\&\ge\delta
    \end{align}
In particular, let $k\in\mathbb N^*$. Taking $\delta=\frac1{N^{1+\frac1k}}$ and $T=\lfloor\text{log}_2(\frac1{4\delta N})\rfloor$, we get $TN=\Omega( n\delta^{-\frac1{1+\frac1k}})$. Therefore, we have shown that the set $\mathcal{X}$ is $\delta$-shattered by the Hypothesis class $\mathcal{H}_n$ with $\text{sfat}_\delta(\mathcal H_n)=\Omega(\frac n{\delta^\eta}), \forall\eta<1$.
\end{proof}


\section{Proof of \texorpdfstring{\cref{lemma_completion}}{Lemma X}}\label{pf:lemma_completion}

To prove \cref{lemma_completion}, we rely on Theorem $7$ in \citet{GRONE1984109}. We begin by introducing the necessary definitions. Let $\mathcal{G}=(V,E)$ be a finite undirected graph. 
A {\it cycle} in $\mathcal{G}$ is a sequence of distinct vertices $v_1, v_2, \dots, v_s\in V$ such that $\{v_i, v_{i+1}\} \in E$ for all $i\in[s-1]$, and $\{v_s, v_1\}\in E$. A cycle is said to be {\it minimal} if and only if it has no chord, where a chord is an edge $\{v_i, v_j\} \in E$ with $|i-j|>1$ and $\{i, j\} \neq \{1, s\}$.

A matrix $\omega$ is said to be $\mathcal{G}$-partial when its entries $w_{ij}$ are determined if and only if $\{i,j\}\in E$, while other elements are undetermined.
A $\mathcal{G}$-partial matrix $\omega$ is said to be non-negative if and only if (a) $\omega_{ij}=\overline{\omega}_{ji}, \ \forall\{i,j\}\in E$ and (b) for any clique $\mathcal C$ of $\mathcal{G}$, the principal submatrix of $\omega$ corresponding to $\mathcal C$ (which has entries corresponding to $\mathcal C$) is positive semidefinite. Recall that a clique $\mathcal C$ of $\mathcal G$ is a complete subgraph of $\mathcal G$. The corresponding principal submatrix is obtained by keeping only the indices in $\mathcal C$.

A {\it completion} of a $\mathcal{G}$-partial matrix $\omega$ is a full Hermitian
matrix $M$ such that $M_{ij} = \omega_{ij}$ for all $\{i, j\} \in E$. We say that M is a non-negative completion if and only if M is also positive semidefinite.
A graph $\mathcal{G}$ is said to be {\it completable} if and only if any $\mathcal{G}$-partial non-negative matrix has a non-negative completion. With these definitions in place, we proceed to state the relevant results in \citet{GRONE1984109}.

\begin{lemma}[\citet{GRONE1984109}]\label{lemma.graph.grone}
        A graph $\mathcal{G}$ is completable if and only if every minimal cycle in the graph is of length $< 4$.
\end{lemma}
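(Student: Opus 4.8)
The statement is the classical positive semidefinite completion theorem of \citet{GRONE1984109}, and the plan is to reproduce its proof. The first observation is that the hypothesis ``every minimal cycle has length $<4$'' says precisely that $\mathcal G$ is \emph{chordal}: a cycle of length $\ge4$ that has a chord is not minimal, so requiring all minimal cycles to have length $<4$ is the same as requiring every cycle of length $\ge4$ to possess a chord. I would prove the two implications separately. Throughout I work with Hermitian matrices; the real symmetric case used in \cref{pf:lemma_completion} follows because if a real partial matrix admitted a complex Hermitian positive semidefinite completion $M$, then $\tfrac12(M+\overline M)$ would be a real symmetric positive semidefinite completion, so it suffices to rule out complex completions.

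\textbf{$\mathcal G$ not chordal $\Rightarrow$ $\mathcal G$ not completable.} Suppose $\mathcal G$ contains a chordless cycle on vertices $v_1,\dots,v_k$ with $k\ge4$. I would exhibit a $\mathcal G$-partial non-negative matrix $\omega$ with no non-negative completion: set every diagonal entry to $1$, set $w_{v_iv_{i+1}}=1$ for $i\in[k-1]$, set $w_{v_kv_1}=0$, and set $w_{ij}=0$ for every other specified edge $\{i,j\}$. Non-negativity: because the cycle is chordless, every clique of $\mathcal G$ meets $\{v_1,\dots,v_k\}$ in at most two vertices, necessarily consecutive on the cycle, so each clique principal submatrix of $\omega$ is block diagonal -- an identity block together with at most one $2\times2$ block equal to $\left(\begin{smallmatrix}1&1\\1&1\end{smallmatrix}\right)$ -- hence positive semidefinite. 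No completion: in any positive semidefinite completion, the Gram vectors $u_{v_1},\dots,u_{v_k}$ are unit vectors with $\langle u_{v_i},u_{v_{i+1}}\rangle=1$, which by the equality case of Cauchy--Schwarz forces $u_{v_1}=\dots=u_{v_k}$, contradicting $\langle u_{v_k},u_{v_1}\rangle=0$.

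\textbf{$\mathcal G$ chordal $\Rightarrow$ $\mathcal G$ completable.} The plan is to fill the unspecified entries in one at a time. The analytic ingredient is a single-entry lemma: a partial Hermitian matrix with exactly one unspecified conjugate pair $\{p,q\}$, all of whose fully specified principal submatrices are positive semidefinite, admits a positive semidefinite completion; one proves the set of admissible values for the $(p,q)$ entry is nonempty by reducing (via a Schur complement against the common block) to a $2\times2$ positivity condition, treating the positive definite case directly and the general positive semidefinite case by a perturbation-and-limit argument using compactness of bounded completions. The combinatorial ingredient is that a chordal graph admits a perfect elimination ordering, so its missing edges can be listed $e_1,\dots,e_r$ in such a way that $\mathcal G\cup\{e_1,\dots,e_\ell\}$ stays chordal for each $\ell$ and, at the step where $e_\ell=\{p,q\}$ is added, the common neighbourhood of $p$ and $q$ in the current graph forms a single clique. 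That is exactly the configuration in which the single-entry lemma applies to the corresponding clique submatrix, letting us fill $e_\ell$ while keeping the matrix non-negative; after $r$ steps the pattern graph is complete and the resulting matrix is a positive semidefinite completion of the original. Equivalently, one can write the maximum-entropy completion explicitly from a clique tree of $\mathcal G$ as an alternating sum, over maximal cliques and separators, of inverses of the specified clique blocks padded by zeros, and verify through the running-intersection property that it agrees with the data on $\mathcal G$ and is positive definite, again passing from positive semidefinite to positive definite clique blocks by perturbation.

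The main obstacle is the chordal $\Rightarrow$ completable direction, and within it the combinatorial bookkeeping: proving that the fill-in order can be chosen so that every single-entry completion is constrained by only one clique, so that the local choices never conflict (equivalently, verifying the clique-tree completion formula). The non-chordal direction, by contrast, is a short self-contained geometric obstruction. Full details of the chordal direction are as in \citet{GRONE1984109}.
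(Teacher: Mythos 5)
The paper does not actually prove \cref{lemma.graph.grone}: it is imported as a black box from \citet{GRONE1984109} and only applied, in \cref{pf:lemma_completion}, to the star-shaped graph of \cref{fig:bipartite_graph} (so only the ``minimal cycles of length $<4$ $\Rightarrow$ completable'' direction is ever used). There is therefore no internal proof to compare against; judged on its own, your reconstruction is a faithful version of the original Grone--Johnson--S\'a--Wolkowicz argument and is sound. The identification of the minimal-cycle condition with chordality is correct. Your non-chordal direction is complete and correct: on a chordless cycle $v_1,\dots,v_k$ ($k\ge4$) every clique contains at most two, necessarily consecutive, cycle vertices, so the proposed data (ones along the path, zero closing the cycle, zeros elsewhere) is partial non-negative, while the Gram-vector/Cauchy--Schwarz rigidity rules out any PSD completion. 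For the chordal direction, your key combinatorial claim -- that when the missing pair $\{p,q\}$ is filled, all new constraints live in a single clique -- is in fact automatic: in a chordal graph the common neighbourhood of two nonadjacent vertices must be a clique, since two nonadjacent common neighbours $x,y$ would create the chordless $4$-cycle $p\,x\,q\,y$; with that, every clique of $\mathcal G\cup\{p,q\}$ containing both $p$ and $q$ sits inside $\{p,q\}\cup\bigl(N(p)\cap N(q)\bigr)$, the one-missing-entry lemma applies there, and the edge-by-edge induction closes. The only part left genuinely sketchy is that one-missing-entry lemma itself (Schur complement plus the perturbation/limit argument in the singular case), which you explicitly defer to the cited source -- acceptable here, since the paper defers the entire lemma.
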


\noindent Now that we have covered the necessary background, we can proceed to prove \cref{lemma_completion}.

\begin{proof}

Let $\omega$ be a partial matrix as stated in \cref{lemma_completion}. Consider the graph $\mathcal{G}=(V,E)$, where $V=[N]$ and $E=\{\{i,j\},i\in[N],j\in\{1,i\}\}\cup\{\{i,i\},i\in[N]\}$.

    \begin{figure}[ht]
        \centering
        \begin{tikzpicture}[
            node/.style = {circle, draw, minimum size=1.5cm, align=center, font=\fontsize{8}{12}\selectfont}, 
            edge/.style = {thick},
            textstyle/.style = {minimum size=0cm}
        ]
        
        \node[node] (U1) {2};
        \node[node] (U2) [right=of U1] {3};
        \node[node] (U3) [right=of U2] {4};
        
        \node[textstyle] (Udots1) [right=of U3] {\(\cdots\)};
        
        \node[node] (Un) [right=of Udots1] {\(N\)};

        \node[node] (V1) [above=2cm of U3] {\(1\)};
        
        

        \foreach \u in {U1, U2, U3, Un} {
            \foreach \v in {V1} {
                \draw[edge] (\u) -- (\v);
            }
        }
        
        
        \end{tikzpicture}
        \caption{Graph representation of $\mathcal G=(V,E)$}
        \label{fig:bipartite_graph}
    \end{figure}
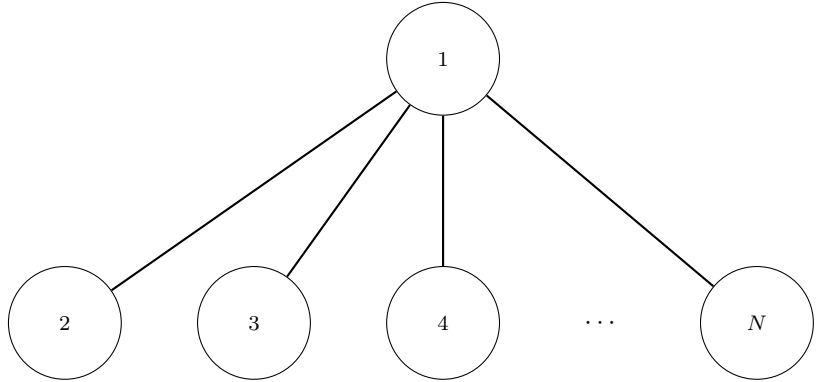
    
    One can easily check that $\mathcal{G}$ is completable using \cref{lemma.graph.grone}. Therefore, if we prove that $\omega$ is non-negative $\mathcal{G}$-partial, it will have a positive semidefinite Hermitian  completion. Combining with the first two conditions in \cref{lemma_completion}, we can show that the completion of $\omega$ is Hermitian, positive semidefinite, with trace equal to $1$, and therefore is a density matrix.
    Now to show that $\omega$ is non-negative $\mathcal{G}$-partial, let $\mathcal C=\{1,i\}$ be a clique of $\mathcal G$. Since $|w_{1i}|\leq\frac1{2\sqrt{N-1}}$ by assumption, the principal submatrix of $\omega$ corresponding to $\mathcal C$, i.e.,
    \begin{align}
        \begin{pmatrix}
            \frac{1}{2} & \omega_{1i}\\
            \omega_{i1} & \frac{1}{2(N-1)}
        \end{pmatrix}
    \end{align}
    can be shown to be non-negative.
\end{proof}

\begin{figure}[H]
    \centering
\[
\begin{bmatrix}
\frac12 & w_{12} & \cdots & \cdots & w_{1N} \\
w_{12} & \frac1{2(N-1)} & ? & \cdots & ? \\
\vdots & ? & \frac1{2(N-1)} & \cdots & ? \\
\vdots & \vdots & \vdots & \ddots & \vdots \\
w_{1N} & ? & ? & \cdots & \frac1{2(N-1)}
\end{bmatrix}
\]
    \caption{General form of the partial matrix described in \cref{lemma_completion}. The interrogation marks denote the unspecified entries in the matrix.}    \label{fig:part_matrix}
\end{figure}


\section{Proof of \texorpdfstring{\cref{thm:final}}{Theorem X}}\label{pf:final}

\begin{proof}\label{proof:final}

Set $\mathcal{X} = \{E_{0,i}, i \in [N-1]\}$ where $E_{0,i}=\frac12(\vonneu 0 + \vonneu i + \meas 0 i)$. Define $t'=\lfloor \frac {t-1} T \rfloor$ and $\tilde t=t-1-Tt'$ for $T>0$. Write $\mathbf{x}$ the complete binary tree of depth $T(N-1)$ such that $\forall t\in[T(N-1)],\forall \ \boldsymbol{\epsilon} \in \{\pm 1\}^{T-1}$,
\begin{equation}
    \mathbf{x}_t (\boldsymbol{\epsilon}) = E_{0,t'+1}, 
\end{equation}
and $\mathbf{v}$ the complete binary tree of depth $T(N-1)$ such that $\forall t\in[T(N-1)],\forall \ \boldsymbol{\epsilon} \in \{\pm 1\}^{T-1}$,
\begin{equation}
    \mathbf{v}_t (\boldsymbol{\epsilon}) = \frac1{4\sqrt{N-1}}(1+\sum_{k=1}^{\tilde t}\epsilon_{k+Tt'}2^{-k}).
\end{equation}
The pair $(\mathbf{x}, \mathbf{v})$ resembles the Von Neumann halving tree constructed in the previous section, with the key difference being that the nodes in the $\mathcal{X}$ valued part of the new tree are now labelled by a different class of measurement operators. We will replace $\mathbf v$ with a slightly modified version $\tilde{\mathbf{v}}$:
\begin{equation}
    \tilde{\mathbf{v}}_t (\boldsymbol{\epsilon}) = \mathbf v_t(\boldsymbol{\epsilon})+\frac14(1+\frac1{N-1}).
\end{equation}
We now associate every path $\boldsymbol{\epsilon}$ to any nonnegative completion $\boldsymbol\omega(\boldsymbol\epsilon)$ of $\text{part}(a_1,...a_{N-1})$, where $\forall i\in\llbracket1,N-1\rrbracket$,
\begin{equation}
    a_i=\mathbf v_{iT}(\boldsymbol\epsilon)+\frac1{2^{T+2}\sqrt{N-1}}\epsilon_{(i+1)T}.
\end{equation}
Recall that the partial matrix $\text{part}(a_1,...a_{N-1})$ satisfies all the conditions in \cref{lemma_completion} and hence can be completed to a valid density matrix $\boldsymbol\omega(\boldsymbol\epsilon)$.

Then, for $\delta\le\frac1{2^{T+2}\sqrt{N-1}}$, we have that $\forall \ \boldsymbol{\epsilon} \in \{\pm 1\}^{T(N-1)-1}, \forall t \in [T(N-1)]$,
\begin{align}
        &\epsilon_t [\operatorname{Tr}_{\boldsymbol \omega(\boldsymbol \epsilon)}(\mathbf{x}_t(\boldsymbol{\epsilon})) - \tilde{\mathbf{v}}_t (\boldsymbol{\epsilon})] \nonumber \\ 
        =& \epsilon_t[\mathbf v_{(t'+1)T}(\boldsymbol\epsilon)+\frac1{2^{T+2}\sqrt{N-1}}\epsilon_{(t'+2)T}-\frac1{4\sqrt{N-1}}(1+\sum_{k=1}^{\tilde t}\epsilon_{k+Tt'}2^{-k})]
        \nonumber\\=&\epsilon_t[\frac1{4\sqrt{N-1}}(1+\sum_{k=1}^{T}\epsilon_{k+Tt'}2^{-k})-\frac1{4\sqrt{N-1}}(1+\sum_{k=1}^{\tilde t}\epsilon_{k+Tt'}2^{-k})]
        \nonumber\\=&\frac{\epsilon_t}{4\sqrt{N-1}}(\sum_{k=\tilde t+1}^{T}\epsilon_{k+Tt'}2^{-k})
        \nonumber\\=&\frac{1}{4\sqrt{N-1}}(\frac{1}{2^{\tilde t+1}}+\sum_{k=\tilde t+2}^{T}\epsilon_{k+Tt'}2^{-k})
        \nonumber\\\ge&\delta.
    \end{align}
By taking $\delta=\frac1{N^{\frac12+\frac1k}}$ and $T=\lfloor\text{log}_2(\frac1{ 4\delta \sqrt{N}})\rfloor$, we get $TN=\Omega( n\delta^{-\frac2{1+\frac2k}})$. Therefore, we have shown that the set $\mathcal{X}$ is $\delta$-shattered by the Hypothesis class $\mathcal{H}_n$ with $\text{sfat}_\delta(\mathcal H_n)=\Omega(\frac n{\delta^\eta}), \forall\eta<2$.
\end{proof}


\section{Proof of \texorpdfstring{\cref{cor:bis}}{Theorem X}}\label{pf:cor}

\begin{proof}[Proof for the $L_1$ loss]
    From \citet{rakhlin2015online, rakhlin2015sequential}, the minimax regret is lower bound by the sequential fat-shattering dimension as:
    $$\mathcal{V}_T \geq \frac{1}{4 \sqrt{2}} \sup_{\delta > 0} \Big\{ \sqrt{\delta^2 T \min\{ \text{sfat}_\delta (\mathcal{H}_n, \mathcal{X}), T\}} \Big\}$$
    provided that the loss function under consideration is the $L_1$-loss.
    Now combining this result with the lower bound on $\text{sfat}_\delta (\mathcal{H}_n, \mathcal{X})$ established in \cref{thm:final} we get $\mathcal{V}_T = \Omega (\sqrt{nT})$. 
\end{proof}

\begin{proof}[Proof for the $L_2$ loss]

From \cref{eq:minmax_reg}, and taking the $y_t$ to be Rademacher random variables, we get:

    \begin{align}
    \mathcal V_T&\geq\Big< \sup_{x_t}\inf_{h_t\in\mathcal H} \mathbb E_{y_t}\Big>_{t=1}^T\Big[ \sum_{t=1}^T (h_t (x_t)- y_t)^2 - \inf_{h \in \mathcal{H}} \sum_{t=1}^T (h(x_t)- y_t)^2 \Big] 
    \nonumber\\&\geq\left\langle \sup_{x_t} \right\rangle_{t=1}^T 
    \left[ \inf_{h_t\in\mathcal H}
    \sum_{t=1}^T (1 +h_t^2)
    + \mathbb E_{y_t} \sup_{h \in \mathcal{H}} 
    \sum_{t=1}^T -(h(x_t) - y_t)^2 
    \right]
    \nonumber\\&\geq\left\langle \sup_{x_t} \right\rangle_{t=1}^T 
    \left[ 
    \sum_{t=1}^T 1
    + \mathbb E_{y_t} \sup_{h \in \mathcal{H}} 
    \sum_{t=1}^T -(h(x_t) - y_t)^2 
    \right]
    \nonumber\\&=\left\langle \sup_{x_t} \right\rangle_{t=1}^T 
    \left[ 
    \mathbb E_{y_t} \sup_{h \in \mathcal{H}} 
    \sum_{t=1}^T 1-(h(x_t) - y_t)^2 
    \right]
    \nonumber\\&=\left\langle \sup_{x_t} \right\rangle_{t=1}^T 
    \left[ 
    \mathbb E_{y_t} \sup_{h \in \mathcal{H}} 
    \sum_{t=1}^T 2y_th(x_t)-h^2(x_t)
    \right]
    \nonumber\\&=\left\langle \sup_{x_t} \right\rangle_{t=1}^T 
    \frac12\left[ 
    \sup_{h \in \mathcal{H}} 
    \sum_{t=1}^T 2h(x_t)-h^2(x_t)+\sup_{h \in \mathcal{H}} 
    \sum_{t=1}^T -2h(x_t)-h^2(x_t)
    \right]
    \nonumber\\&\geq\left\langle \sup_{x_t} \right\rangle_{t=1}^T 
    \frac12\left[ 
    \sup_{h \in \mathcal{H}} 
    \sum_{t=1}^T h(x_t)(2-h(x_t))+ 
    (\sum_{t=1}^T -2h(x_t)-h^2(x_t))_{h=\operatorname{Tr}_{\frac1NI_N}}
    \right]
    \nonumber\\&\geq\left\langle \sup_{x_t} \right\rangle_{t=1}^T 
    \frac12\left[ 
    \sup_{h \in \mathcal{H}} 
    \sum_{t=1}^T h(x_t)+ 
    \sum_{t=1}^T -\frac2N-\frac1{N^2}
    \right]
    \nonumber\\&\geq\mathcal R_T(\mathcal H)-\frac {3T}{2N}
\end{align}

Where we used the fact that

\begin{align}
    \mathcal{R}_T(\mathcal{H})
    &= \left\langle \sup_{x_t} \right\rangle_{t=1}^T \, \mathcal{R}_T(x, \mathcal{H})
    \nonumber\\&=\left\langle \sup_{x_t} \right\rangle_{t=1}^T \, \mathbb{E}_{y} \left[ \sup_{h \in \mathcal{H}} \sum_{t=1}^T y_t h_t(x_t) \right]
    \nonumber\\&=\left\langle \sup_{x_t} \right\rangle_{t=1}^T \, \frac12\left[ \sup_{h \in \mathcal{H}} \sum_{t=1}^T h_t(x_t) -\sup_{h \in \mathcal{H}} \sum_{t=1}^T h_t(x_t)\right]
    \nonumber\\&\leq\left\langle \sup_{x_t} \right\rangle_{t=1}^T \, \frac12\left[ \sup_{h \in \mathcal{H}} \sum_{t=1}^T h_t(x_t)\right]
\end{align}

\end{proof}


\section{Proof of \texorpdfstring{\cref{theo:final_pure}}{Theorem X}}\label{pf:final_pure}

\begin{proof}
    Set $\mathcal{X} = \{E_{0,i},i\in[N-1]\}$ where $E_{0,i}=\frac12(\vonneu 0 + \vonneu i + \meas 0 i)$.
    Define $t'=\lfloor \frac {t-1} T \rfloor$ and $\tilde t=t-1-Tt'$ for $T>0$.
    Write $\mathbf{x}$ the complete binary tree of depth $T(N-1)$ such that $\forall t\in[T(N-1)],\forall \ \boldsymbol{\epsilon} \in \{\pm 1\}^{T-1}$,
\begin{equation}
    \mathbf{x}_t (\boldsymbol{\epsilon}) = E_{0,t'+1}. 
\end{equation}
and $\mathbf{v}$ the complete binary tree of depth $T$ such that $\forall t\in[T],\forall \ \boldsymbol{\epsilon} \in \{\pm 1\}^{T-1}$,
\begin{equation}
    \mathbf{v}_t (\boldsymbol{\epsilon}) = \frac1{4\sqrt{N-1}}(1+\sum_{k=1}^{\tilde t}\epsilon_{k+Tt'}2^{-k}).
\end{equation}
We can associate every path $\boldsymbol\epsilon$ to a pure state
\begin{equation}
    \ket{\psi(\boldsymbol\epsilon)}=\frac1 {\sqrt{2}}\vect0+\sum_{i=1}^{N-2}a_i\vect i+\left(\frac12-\sum_{i=1}^{N-2}a_i^2 \right)\vect{N-1}
\end{equation}
    where,
\begin{equation}
    a_i=\mathbf v_{iT}(\boldsymbol\epsilon)+\frac1{2^{T+2}\sqrt{N-1}}\epsilon_{(i+1)T}.
\end{equation}
Here $i\in\llbracket1,N-2\rrbracket$.
Let $\mathbf w_t=\frac1{\sqrt{2}}\mathbf v_t+\frac12(\frac12+a_{t'+1}^2)$.
Then, for $\delta\le\frac1{2^{T+2}\sqrt{2(N-1)}}$, we have that $\forall \ \boldsymbol{\epsilon} \in \{\pm 1\}^{T(N-2)-1}, \forall t \in [T(N-2)]$,
        \begin{align}
    \label{eq:final}
        \epsilon_t [\operatorname{Tr}_{\ket{\psi(\boldsymbol\epsilon)}}(\mathbf{x}_t(\boldsymbol{\epsilon})) - \mathbf{w}_t (\boldsymbol{\epsilon})] 
        &= \frac{\epsilon_t}{\sqrt{2}}[\mathbf v_{(t'+1)T}(\boldsymbol\epsilon)+\frac1{2^{T+2}\sqrt{N-1}}\epsilon_{(t'+2)T}-\frac1{4\sqrt{N-1}}(1+\sum_{k=1}^{\tilde t}\epsilon_{k+Tt'}2^{-k})]
        \nonumber\\&=\frac{\epsilon_t}{\sqrt{2}}[\frac1{4\sqrt{N-1}}(1+\sum_{k=1}^{T}\epsilon_{k+Tt'}2^{-k})-\frac1{4\sqrt{N-1}}(1+\sum_{k=1}^{\tilde t}\epsilon_{k+Tt'}2^{-k})]
        \nonumber\\&=\frac{\epsilon_t}{4\sqrt{2(N-1)}}(\sum_{k=\tilde t+1}^{T}\epsilon_{k+Tt'}2^{-k})
        \nonumber\\&=\frac{1}{4\sqrt{2(N-1)}}(\frac{1}{2^{\tilde t+1}}+\sum_{k=\tilde t+2}^{T}\epsilon_{k+Tt'}2^{-k})
        \nonumber\\&\ge\delta
    \end{align}
In particular, let $k\in\mathbb N^*$. By taking $\delta=\frac1{N^{\frac12+\frac1k}}$ and $T=\lfloor\text{log}_2(\frac1{ 4\delta \sqrt{N}})\rfloor$, we get $TN=\Omega( n\delta^{-\frac2{1+\frac2k}})$. Therefore, we have shown that the set $\mathcal{X}$ is $\delta$-shattered by the Hypothesis class $\mathcal{H}$ with $\text{sfat}_\delta(\mathcal H, \mathcal{X})=\Omega(\frac n{\delta^\eta}) \ \forall\eta<2$.
\end{proof}


\section{Proof of \texorpdfstring{\cref{theo:realizable}}{Theorem X}}\label{pf:realizable}

\begin{proof}
    All we need to do in order to lower bound $\Bar{\mathcal{V}}_T$ is to exhibit a particular strategy $(\mathcal{D}_t)_{t\in[T]}$ the adversary can adopt. We will then show that such a strategy guarantees a minimax regret greater then $\Omega(\sqrt{nT})$, independently of the strategy picked by the learner.

    It was shown in \citet{aaronson2007learnability} that $n$-qubit quantum states, considered as a hypothesis class, have $(\delta,\frac{\delta^2}n)$-fine-shattering dimension greater than $D=\lfloor\frac n {5\delta^2}\rfloor$. Let $T\in\mathbb N^*$ and write $\delta=\sqrt{\frac {n} {5T}}$. This means that there exists $(E_t)_{t\in [D]}\in\mathcal X^{D}$ such that $\forall (\epsilon_t)\in{\{\pm1\}}^D,\exists\rho\in\mathcal C_n,\forall t\in[D]$:

    \begin{equation}\label{eq:fineshatdim}
    \operatorname{Tr}(E_t \rho) \in
    \begin{cases}
    \left[ \frac{1}{2} - \delta - \frac{\delta^2}{n}, \; \frac{1}{2} - \delta \right] & \text{if } \epsilon_t = -1, \\\\
    \left[ \frac{1}{2} + \delta, \; \frac{1}{2} + \delta + \frac{\delta^2}{n} \right] & \text{if } \epsilon_t = 1.
    \end{cases}
    \end{equation}

    We will denote by $\boldsymbol \rho:2^D\longrightarrow\mathcal C_n$ the function that associates every path to the corresponding quantum state in \cref{eq:fineshatdim}. We can now show that if the adversary picks $E_t$ at every round $t$, no matter the strategy chosen by the learner, there exists a quantum state $\rho$ that will guarantee a loss greater than $\sqrt{nT}$. 

    For every $t\in[D]$, write $\epsilon_t=\operatorname{sgn}\Big[\mathbb E_{\omega\sim\mathcal Q_t}(\frac12-\operatorname{Tr}(E_t\omega))\Big]$. Then, we clearly have that 
    
    \begin{align}
        \Big< \mathop{\mathbb{E}}_{{\omega_t} \sim \mathcal{Q}} \  \Big>_{t=1}^T \Big[ \sum_{t=1}^T |\operatorname{Tr} (E_t\omega_t )-\operatorname{Tr}(E_t\rho )|\Big]&\geq D\delta
        \nonumber\\&\geq\delta(T-1)
        \nonumber\\&\geq\sqrt{\frac{nT}5}-1
    \end{align}

    \begin{remark}
        The last thing we need to check is that the feedback received by the learner doesn't give \textit{much more} information than the interval of size $\frac{\delta^2}n$ in which lies $\operatorname{Tr} (E_t\rho )$, as defined in \cref{eq:fineshatdim}. If $y_t\sim\mathcal U([\operatorname{Tr}(E_t\rho )-\epsilon,\operatorname{Tr}(E_t\rho )+\epsilon])$ for instance, this amounts to $\epsilon\geq\frac{\delta^2}n$, \textit{id est} $T\geq\frac1{5\epsilon}$. For a general distribution, picking $T$ big enough such that the probability of $y_t\in[\operatorname{Tr}(E_t\rho )-\epsilon,\operatorname{Tr}(E_t\rho )+\epsilon]$ is greater than $\frac12$ will only divide the lower bound by 2.
    \end{remark}

    Finally, the fact that $\mathbb{E}(y_t) =\operatorname{Tr}(E_t\rho )$ allows us to conclude.

    Therefore, $\Bar{\mathcal{V}}_T= \Omega(\sqrt{nT})$.
    It is easy to see that $\Bar{\mathcal{V}}_T\leq\mathcal{V}_T =O(\sqrt{nT})$.
\end{proof}

\section{Regret bounds with sequential complexities in classical online learning}\label{rademacher}

Notions of complexity for a given hypothesis class have traditionally been studied within the batch learning framework and are often characterized by the Rademacher complexity \citep{Rademacher}.

\begin{definition}[Rademacher complexity]
    Let $\mathcal{X}$ be a sample space with an associated distribution $\mathcal D$ and $\mathcal{H}$ be the hypothesis class. Let $(x_j)_{j\in[m]} \sim \mathcal D^m$ be a sequence of samples, sampled i.i.d. from $\mathcal{X}$. The Rademacher complexity can then be defined as:
    \begin{equation*}
\label{eq:rad_comp}
    \mathcal{R}_m (\mathcal{H}) = \mathop{\mathbb{E}}_{(x_j)\sim \mathcal D^m} \Big[\frac{1}{m} \mathop{\mathbb{E}}_{\boldsymbol{\epsilon}} \Big[ \sup_{h \in \mathcal{H}} \sum_{j=1}^m \epsilon_j h(x_j) \Big]
 \Big],
\end{equation*}
where $\boldsymbol{\epsilon} = (\epsilon_1, \cdots, \epsilon_m)$ are called Rademacher variables, that satisfy $P(\epsilon = +1) = P(\epsilon = -1) = 1/2$.
\end{definition}
Perhaps not surprisingly, in addition to being an indicator for expressivity of a given hypothesis class, Rademacher complexity also upper bounds generalization error in the setting of batch learning \citep{Rademacher}.

Rademacher complexity generalises to sequential Rademacher complexity in the online setting \citep{rakhlin2015online}. In order to define sequential Rademacher complexity let us first define a $\mathcal{X}$-valued complete binary tree. 

\begin{definition}[Sequential Rademacher complexity]
    Let $\mathbf{x}$ be a $\mathcal{X}$-valued complete binary tree of depth $T$. The sequential Rademacher complexity of a hypothesis class $\mathcal{H}$ on the tree $\mathbf{x}$ is then given as:
    \begin{equation*}
    \label{eq:seq_rad_comp}
        \mathfrak{R}_T (\mathcal{H}, \mathbf{x}) = \Big[\frac{1}{T} \mathop{\mathbb{E}}_{\boldsymbol{\epsilon}} \Big[ \sup_{h \in \mathcal{H}} \sum_{t=1}^T \epsilon_t h(\mathbf{x}_t(\boldsymbol{\epsilon})) \Big]
    \Big].
    \end{equation*}
\end{definition}

The $\mathbf{x}$ dependence of the sequential Rademacher complexity can be subsequently removed by considering the supremum over all $\mathcal{X}$-valued trees of depth $T$: $\mathfrak{R}_T (\mathcal{H}) = \sup_{\mathbf{x}} \mathfrak{R}_T(\mathcal{H}, \mathbf{x})$. Similar to how Rademacher complexity upper bounds the generalization error, the sequential Rademacher complexity was shown to upper bound the minimax regret \citep{rakhlin2015online}. For the case of supervised learning, the following relation holds: 
\begin{equation}
\label{eq:reg_rad}
    \mathcal{V}_T \leq 2 L T \mathfrak{R}_T(\mathcal{H}).
\end{equation}
Here, $L$ comes from the fact that the loss function considered is $L$-Lipschitz. 

The growth of sequential Rademacher complexities has been shown to be influenced by other related notions of sequential complexities. One prominent example is the sequential fat-shattering dimension \citep{rakhlin2015online}. It was shown in \citet{rakhlin2015online} that this dimension serves as an upper bound to the sequential Rademacher complexity, which subsequently provides a bound on the minimax regret as per \cref{eq:reg_rad}. Similarly, recall that the minimax regret can be lower bounded by the sequential fat-shattering dimension (\cref{eq:Regret_LB}), provided that $\ell_t (h_t(x_t), y_t) = \vert h_t(x_t) - y_t \vert$ and that $\mathcal{P}$ is taken to be the whole set of all distributions on $\mathcal{X}$. In fact, it is also lower bounded by the Rademacher complexity \citep{rakhlin2015online, rakhlin2015sequential}. 
\begin{eqnarray}
\label{eq:Regret_LB_bis}
    \mathcal{V}_T &\geq& \Big< \sup_{\mathcal{D}_t \in \mathcal{P}} \ \mathop{\mathbb{E}}_{x_t \sim \mathcal{D}_t} \Big>_{t=1}^T  \ \mathop{\mathbb{E}}_{\boldsymbol{\epsilon}} \Big[ \sup_{h \in \mathcal{H}} \sum_{t=1}^T \epsilon_t h(\mathbf{x}_t(\boldsymbol{\epsilon})) \Big] \nonumber \\
    &\geq& \frac{1}{4 \sqrt{2}} \sup_{\delta > 0} \Big\{ \sqrt{\delta^2 T \min\{ \text{sfat}_\delta (\mathcal{H}, \mathcal{X}), T\}} \Big\}.
\end{eqnarray}
where ${\boldsymbol{\epsilon}} = (\epsilon_1, \epsilon_2, \cdots, \epsilon_T)$ are Rademacher variables. 


\section{Coupling}\label{coupling}

To establish regret bounds in smoothed online learning, \citet{Haghtalab2024, block2022smoothed} introduced the concept of coupling. The key idea here is that if the distributions $(\mathcal{D}_t)_{t=1}^T$ are $\sigma$-smooth with respect to $\mathcal{D}$, we may pretend that in expectation the data is sampled i.i.d from $\mathcal{D}$ instead of $(\mathcal{D}_t)_{t=1}^T$. For a more formal description, define $\mathcal{B}_T (\sigma, \mathcal{D})$ to be the set of joint distributions $\mathcal{D}_\wedge$ on $\mathcal{X}^T$, where each marginal distribution $\mathcal{D}_t(\cdot \vert x_1,...,x_{t-1})$ is conditioned on the previous draws. 

\begin{definition}[Coupling]
\label{def:coupling}
    A distribution $\mathcal{D}_\wedge \in \mathcal{B}_T (\sigma, \mathcal{D})$ is said to be coupled to independent random variables drawn according to $\mathcal{D}$ if there exists a probability measure $\Pi$ with random variables $(x_t, Z_t^j)_{t \in [T], j \in [k]}\sim\Pi$ satisfying the following conditions:
    \begin{enumerate}
        \item $x_t \sim \mathcal{D}_t (\cdot \vert x_1. x_2, \cdots, x_{t-1})$,
    
        \item $\{ Z_t^j \}_{t \in [T], j \in [k]} \sim \mathcal{D}^{\otimes kT}$,
    
        \item With probability at least $1 - T e^{-\sigma k}$, we have $x_t \in \{ Z_t^j \}_{ j \in [k]} \ \forall t \in [T]$. 
    \end{enumerate}
\end{definition}

The last relation is particularly interesting and is used to derive the regret bounds for smoothed online quantum state learning.

\section{Proof of \texorpdfstring{\cref{theo:smooth_ub}}{Theorem X}}\label{pf:smooth}

For this proof, we will use the notion of Rademacher complexity and a few related results that can be found in \cref{rademacher}. We also use the concept of coupling described in \cref{coupling}.

Consider the sequential Rademacher complexity $\mathfrak{R}_T (\ell \circ \mathcal{H}, \mathbf{x})$ in \cref{eq:seq_rad_comp} defined on the function class $\ell \circ \mathcal{H}$.  For the purpose of this proof let us consider a slightly modified version of the sequential Rademacher complexity defined as:
\begin{align}
    \mathfrak{R}_T (\ell \circ \mathcal{H}, \mathcal{D}_\wedge) &= \mathop{\mathbb{E}}_{\mathbf{x} \sim \mathcal{D}_\wedge} \mathfrak{R}_T (\ell \circ \mathcal{H}, \mathbf{x}) \nonumber \\
    &= \mathop{\mathbb{E}}_{\mathbf{x} \sim \mathcal{D}_\wedge} \Big[\frac{1}{T} \mathop{\mathbb{E}}_{\boldsymbol{\epsilon}} \Big[ \sup_{h \in \mathcal{H}} \sum_{j=1}^T \epsilon_j \ell (h(\mathbf{x}_t(\boldsymbol{\epsilon})), h_\rho (\mathbf{x}_t(\boldsymbol{\epsilon}))) \Big]
\Big],
\end{align}
where $\mathcal{D}_\wedge \in \mathcal{B}_T (\sigma, \mathcal{D})$. Moreover we will call $\mathfrak{R}_T (\ell \circ \mathcal{H}, \mathcal{B}_T) = \sup_{\mathcal{D}_\wedge \in \mathcal{B}_T(\sigma, \mathcal{D})} \mathfrak{R}_T (\ell \circ \mathcal{H}, \mathcal{D}_\wedge)$. The key idea now is to relate $\mathfrak{R}_T (\ell \circ \mathcal{H}, \mathcal{D}_\wedge)$ to $\mathcal{R} (\mathcal{H})$ (Rademacher complexity assuming i.i.d. data inputs; see \cref{eq:rad_comp}). This can be achieved using the idea of coupling discussed in the previous section.

\begin{lemma}
Let $\mathcal{D}_\wedge \in \mathcal{B}_T (\sigma, \mathcal{D})$ be a distribution that is coupled to independent random variables drawn according to the distribution $\mathcal{D}$, as per \cref{def:coupling}. Let $\mathcal{H}$ be a Hypothesis class and $\ell$ be the loss function. Then we have $\mathfrak{R}_T (\ell \circ \mathcal{H}, \mathcal{D}_\wedge) \leq T^2 e^{-\sigma k} + \mathcal{R}_{kT} (\ell \circ \mathcal{H})$
\end{lemma}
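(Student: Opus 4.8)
The plan is to bound the sequential Rademacher complexity $\mathfrak{R}_T(\ell\circ\mathcal{H},\mathcal{D}_\wedge)$ by coupling the adversarially-generated tree to an i.i.d.\ sample, using the event in \cref{def:coupling} that $x_t\in\{Z_t^j\}_{j\in[k]}$ for all $t$. Fix a coupling $\Pi$ with random variables $(x_t,Z_t^j)_{t\in[T],j\in[k]}$. On the ``good'' event $\mathcal{E}$, which occurs with probability at least $1-Te^{-\sigma k}$, each node value $\mathbf{x}_t(\boldsymbol{\epsilon})$ of the tree drawn from $\mathcal{D}_\wedge$ coincides with one of the $kT$ i.i.d.\ draws $\{Z_t^j\}$; on the complement $\mathcal{E}^c$ we bound the inner supremum crudely. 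The sum $\sum_{t=1}^T\epsilon_t\ell(h(\mathbf{x}_t(\boldsymbol{\epsilon})),h_\rho(\mathbf{x}_t(\boldsymbol{\epsilon})))$ is over $T$ terms, each of which (since $\ell$ is the $L_1$-loss with values in $[0,1]$, and the two arguments each lie in $[0,1]$) is bounded in absolute value by $1$; so on $\mathcal{E}^c$ the normalized quantity $\frac1T\mathbb{E}_{\boldsymbol\epsilon}\sup_h\sum_t\epsilon_t\ell(\cdots)$ is at most $1$, contributing at most $Te^{-\sigma k}\cdot T = T^2 e^{-\sigma k}$ after accounting for the probability of $\mathcal{E}^c$ and the $1/T$ normalization being absorbed appropriately.

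Next, on the good event, the idea is to dominate the sequential (tree) supremum over $T$ terms by a supremum over the flat i.i.d.\ collection of $kT$ points $\{Z_t^j\}$. Concretely, since every $\mathbf{x}_t(\boldsymbol\epsilon)$ equals some $Z_t^{j(t,\boldsymbol\epsilon)}$, one can write $\sum_{t=1}^T\epsilon_t\ell(h(\mathbf{x}_t(\boldsymbol\epsilon)))$ as a signed partial sum over a subset of the $kT$ variables, and bound its expectation over the Rademacher signs by the full (batch) Rademacher average $\mathcal{R}_{kT}(\ell\circ\mathcal{H})$, up to the normalization: because the batch Rademacher complexity with $m=kT$ points carries a $1/(kT)$ factor while our quantity carries $1/T$, there is a factor of $k$ — but the standard argument (monotonicity of Rademacher sums under taking subsets / contraction, combined with the observation that a depth-$T$ tree path selects at most $T$ of the $kT$ leaves) shows $\frac1T\mathbb{E}_{\boldsymbol\epsilon}\sup_h\sum_{t=1}^T\epsilon_t\ell(h(\mathbf{x}_t(\boldsymbol\epsilon)))\le \mathcal{R}_{kT}(\ell\circ\mathcal{H})$ when the conditioning event holds. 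Taking expectation over $\mathbf{x}\sim\mathcal{D}_\wedge$ and combining the two events gives $\mathfrak{R}_T(\ell\circ\mathcal{H},\mathcal{D}_\wedge)\le T^2 e^{-\sigma k}+\mathcal{R}_{kT}(\ell\circ\mathcal{H})$, and taking the supremum over $\mathcal{D}_\wedge\in\mathcal{B}_T(\sigma,\mathcal{D})$ yields the stated bound for $\mathfrak{R}_T(\ell\circ\mathcal{H},\mathcal{B}_T)$.

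The main obstacle I anticipate is making the reduction from the \emph{sequential} Rademacher object (where the tree, and hence which $Z_t^j$ is selected, depends adaptively on the sign pattern $\boldsymbol\epsilon$) to the \emph{batch} Rademacher complexity (a fixed i.i.d.\ sample) fully rigorous: one must handle the fact that the indices $j(t,\boldsymbol\epsilon)$ picked out of the coupled pool vary with $\boldsymbol\epsilon$, so the signed sum is not literally a sub-sum of a fixed Rademacher process. The standard fix — which I would follow, as in \citet{block2022smoothed} — is to note that the coupling's good event is $\boldsymbol\epsilon$-independent (it is a statement about $(x_t,Z_t^j)$ only), condition on it, and then use that for any realization of the pool, the tree is a fixed function and the supremum over paths of a Rademacher sum of at most $T$ bounded terms drawn from a pool of size $kT$ is dominated, in expectation over the signs, by the unrestricted Rademacher average over all $kT$ pool points; a symmetrization/absorption argument then removes the dependence on which points were selected. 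I would also double-check the precise exponent in the $T^2 e^{-\sigma k}$ term against the failure probability $Te^{-\sigma k}$ and the normalization, but this is routine bookkeeping rather than a genuine difficulty.
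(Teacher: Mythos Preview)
Your proposal is correct and follows essentially the same route as the paper: split on the coupling's good event, crudely bound the complement by $T^2 e^{-\sigma k}$, and on the good event reduce the sequential quantity to the batch Rademacher average over the $kT$ i.i.d.\ pool points. The paper makes your ``symmetrization/absorption'' step concrete by introducing fresh independent Rademacher variables $\epsilon_{jt}$ for the unused pool points $Z_t^j\neq \mathbf{x}_t(\boldsymbol{\epsilon})$ (whose contribution is zero in expectation), and then pulling $\mathbb{E}_{\epsilon_{jt}}$ inside the supremum via Jensen to obtain the full sum over all $kT$ samples --- exactly the mechanism that handles the $\boldsymbol{\epsilon}$-dependent selection you flagged as the main obstacle.
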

\begin{proof}
Let $A$ be an event that $\mathbf{x}_t (\boldsymbol{\epsilon}) \in \{ Z_t^j \}_{j=1}^k \ \forall t \in [T]$. Furthermore, let $\chi_A$ be the corresponding indicator function and $\chi_{A^c}$ be $1-\chi_A$. Then we have:
\begin{align}
\label{eq:rad_comp_ubound}
    \mathfrak{R}_T (\ell \circ \mathcal{H}, \mathcal{D}_\wedge) &= \mathop{\mathbb{E}}_{\mathbf{x} \sim \mathcal{D}_\wedge} \Big[\frac{1}{T} \mathop{\mathbb{E}}_{\boldsymbol{\epsilon}} \Big[ \sup_{h \in \mathcal{H}} \sum_{j=1}^T \epsilon_j \ell (h(\mathbf{x}_t(\boldsymbol{\epsilon})), h_\rho (\mathbf{x}_t(\boldsymbol{\epsilon}))) \Big] \Big] \nonumber \\
    &= \mathop{\mathbb{E}}_{\mathbf{x} \sim \Pi} \Big[\frac{1}{T} \mathop{\mathbb{E}}_{\boldsymbol{\epsilon}} \Big[ \sup_{h \in \mathcal{H}} \sum_{j=1}^T \epsilon_j \ell (h(\mathbf{x}_t(\boldsymbol{\epsilon})), h_\rho (\mathbf{x}_t(\boldsymbol{\epsilon}))) \Big] \Big] \nonumber \\
    &= \mathop{\mathbb{E}}_{\mathbf{x} \sim \Pi} \Big[\frac{1}{T} \mathop{\mathbb{E}}_{\boldsymbol{\epsilon}} \Big[ \chi_A \sup_{h \in \mathcal{H}} \sum_{j=1}^T \epsilon_j \ell (h(\mathbf{x}_t(\boldsymbol{\epsilon})), h_\rho (\mathbf{x}_t(\boldsymbol{\epsilon}))) \Big] \Big] \nonumber \\ 
    &\hspace{2cm}+ \mathop{\mathbb{E}}_{\mathbf{x} \sim \Pi} \Big[\frac{1}{T} \mathop{\mathbb{E}}_{\boldsymbol{\epsilon}} \Big[ \chi_{A^c} \sup_{h \in \mathcal{H}} \sum_{j=1}^T \epsilon_j \ell (h(\mathbf{x}_t(\boldsymbol{\epsilon})), h_\rho (\mathbf{x}_t(\boldsymbol{\epsilon}))) \Big] \Big] \nonumber \\
    &\leq \mathop{\mathbb{E}}_{\mathbf{x} \sim \Pi} \Big[\frac{1}{T} \mathop{\mathbb{E}}_{\boldsymbol{\epsilon}} \Big[ \chi_A \sup_{h \in \mathcal{H}} \sum_{j=1}^T \epsilon_j \ell (h(\mathbf{x}_t(\boldsymbol{\epsilon})), h_\rho (\mathbf{x}_t(\boldsymbol{\epsilon}))) \Big] \Big] + T^2 e^{-\sigma k} \nonumber \\
    &\leq \mathop{\mathbb{E}}_{\mathbf{x} \sim \Pi} \Big[\frac{1}{T} \mathop{\mathbb{E}}_{\boldsymbol{\epsilon}} \Big[ \chi_A \sup_{h \in \mathcal{H}} \sum_{j=1}^T \epsilon_j \ell (h(\mathbf{x}_t(\boldsymbol{\epsilon})), h_\rho (\mathbf{x}_t(\boldsymbol{\epsilon}))) \nonumber \\
    &\hspace{2cm}+  \sum_{j : Z_t^j \neq \mathbf{x}_t(\boldsymbol{\epsilon})} \sum_{t=1}^T \mathop{\mathbb{E}}_{\epsilon_{jt}} \epsilon_{jt} \ell (h(Z_t^j), h_\rho (Z_t^j)) \Big] \Big] + T^2 e^{-\sigma k} \nonumber \\
    &\leq  \mathop{\mathbb{E}}_{\mathbf{x} \sim \Pi} \Big[\frac{1}{T} \mathop{\mathbb{E}}_{\boldsymbol{\epsilon}} \Big[ \chi_A \sup_{h \in \mathcal{H}} \sum_{j=1}^T \epsilon_j \ell (h(\mathbf{x}_t(\boldsymbol{\epsilon})), h_\rho (\mathbf{x}_t(\boldsymbol{\epsilon}))) \nonumber \\
    &\hspace{2cm}+ \mathop{\mathbb{E}}_{\epsilon_{jt}} \sum_{j : Z_t^j \neq \mathbf{x}_t(\boldsymbol{\epsilon})} \sum_{t=1}^T  \epsilon_{jt} \ell (h(Z_t^j), h_\rho (Z_t^j)) \Big] \Big] + T^2 e^{-\sigma k} \nonumber \\
    &\leq \mathop{\mathbb{E}}_{\mathcal{D}} \Big[ \frac{1}{T} \mathop{\mathbb{E}}_{\epsilon_{jt}} \Big[ \sup_{h \in \mathcal{H}} \sum_{j = 1}^k \sum_{t = 1}^T \epsilon_{jt} \ell (h(Z_t^j), h_\rho (Z_t^j)) \Big] \Big] + T^2 e^{-\sigma k} \nonumber \\
    &\leq T^2 e^{-\sigma k} + \mathcal{R}_{kT} (\ell \circ \mathcal{H}).
\end{align}
\end{proof}
Here the first inequality comes from the fact that coupling constructed in the previous section bounds the probability of $\mathbf{x}_t (\boldsymbol{\epsilon}) \notin \{Z_t^j\}_{j=1}^k$ at least for one value of $t$. The second inequality follows from the fact that $\sigma_t$ has a zero mean while the third one follows Jensen's inequality.

\begin{proof}[Proof of \cref{theo:smooth_ub}]

As per \cref{eq:reg_rad} $\mathcal{V}_T$ is upper bounded by the sequential Rademacher complexity $\mathfrak{R}_T (\mathcal{H})$. Therefore it suffices to establish an upper bound on the latter in the smoothed setting considered here. \cref{eq:rad_comp_ubound} bounds the distribution dependent sequential Rademacher complexity with the standard notion of Rademacher complexity which assumes independent samples. Assuming the loss function to be $L$-Lipschitz, the quantity $\mathcal{R}_{kT}(\ell \circ \mathcal{H})$ can be upper bounded as:
\begin{equation}
\label{eq:loss_to_nloss}
    \mathcal{R}_{kT}(\ell \circ \mathcal{H}) \leq L \mathcal{R}_{kT}(\mathcal{H}).
\end{equation}
The sequential Rademacher complexity can be bounded above by the sequential fat-shattering dimension. Likewise, one can establish upper bounds on $\mathcal{R}_{kT} (\mathcal{H})$:
\begin{equation}
\label{eq:integral_bounds}
    \mathcal{R}_{kT}(\mathcal{H}) \leq  \inf_{\alpha > 0} \Big\{ 4 \alpha k T + 12 \sqrt{kT} \int_\alpha^1 \sqrt{K \text{fat}_{c \delta} (\mathcal{H}) \log \frac{2}{\delta}} d\delta \Big\},
\end{equation}
where $K$ and $c$ are constants. Note here that $\text{fat}_{\delta} (\mathcal{H})$ unlike its sequential counterparts assume independent data. Therefore one can recover the definition of $\text{fat}_{\delta} (\mathcal{H})$ from their sequential version by replacing the $\mathcal{X}$ and $\mathbb{R}$-valued tree by the sets $\mathcal{X}$ and $\mathbb{R}$. Combining \cref{eq:rad_comp_ubound,eq:loss_to_nloss,eq:integral_bounds}, and setting $k = \frac{2 \log T}{\sigma}$ we get:
\begin{equation}
    \mathfrak{R}_T(\ell \circ \mathcal{H}, \mathcal{D}_\wedge) \leq 1 + L \inf_{\alpha > 0} \left\{ \frac{8 \alpha T \log T}{\sigma} + 12 \sqrt{\frac{2 T \log T}{\sigma}} \int_\alpha^1 \sqrt{K \text{fat}_{c \delta} (\mathcal{H}) \log \frac{2}{\delta}} d\delta \right\}.
\end{equation}
Now using the relation that $\text{fat}_\delta (\mathcal{H}_n) = O(n/\delta^2)$ when $\mathcal{H}_n=\{\operatorname{Tr}_\omega, \omega\in\mathcal C_n\}$ and setting $K = c = 1$ we get:
\begin{equation}
\label{eq:penultimate_ubound}
    \mathfrak{R}_T(\ell \circ \mathcal{H}_n, \mathcal{D}_\wedge) \leq 1 + L \inf_{\alpha > 0} \left\{ \frac{8 \alpha T \log T}{\sigma} + 12 \sqrt{\frac{2 n T \log T}{\sigma}} \int_\alpha^1 \sqrt{\frac{1}{\delta^2} \log \frac{2}{\delta}} d\delta \right\}.
\end{equation}
Finally to eliminate the infimum in \cref{eq:penultimate_ubound} we recall that $\alpha \in [0,1]$ and therefore for any function $f$ on $\alpha$ we get $\inf_{\alpha} f(\alpha) \leq f (\alpha = \alpha^\star); \ \alpha^\star \in [0,1]$. Thus setting $\alpha = \sqrt{\frac{n \sigma}{T \log T}}$, we get:
\begin{equation}
    \mathfrak{R}_T(\ell \circ \mathcal{H}_n, \mathcal{D}_\wedge) = O \Bigg(\sqrt{\frac{nT \log T}{\sigma}} \ \Bigg).
\end{equation}
\end{proof}

\end{document}